\newcommand{\diagdots}[3][-25]{%
  \rotatebox{#1}{\makebox[0pt]{\makebox[#2]{\xleaders\hbox{$\cdot$\hskip#3}\hfill\kern0pt}}}%
}
\numberwithin{equation}{section}
\theoremstyle{definition}
\newtheorem{theorem}{Theorem}[section]
\newtheorem{proposition}[theorem]{Proposition}
\newtheorem{definition}[theorem]{Definition}
\newtheorem{example}[theorem]{Example}
\newtheorem{notation}[theorem]{Notation}
\newtheorem{remark}[theorem]{Remark}
\newtheorem{lemma}[theorem]{Lemma}
\DeclareMathAlphabet{\pazocal}{OMS}{zplm}{m}{n}
\newcommand{\numberset}{\mathbb}
\newcommand{\F}{\numberset{F}}
\newcommand{\fq}{\mathbb{F}_{q}}
\newcommand{\fqD}{\fq[D_{2n}]}
\newcommand{\mC}{\mathcal{C}}
\newcommand{\mI}{\mathcal{I}}
\newcommand{\rk}{\textnormal{rk}}
\newcommand{\im}{\textnormal{im}}
\newcommand{\Mat}{\textnormal{Mat}}
\newcommand*{\myproofname}{Proof of the claim}
\renewcommand*\env@matrix[1][*\c@MaxMatrixCols c]{%
  \hskip -\arraycolsep
  \let\@ifnextchar\new@ifnextchar
  \array{#1}}
\title{
\textbf{Dihedral Quantum Codes}}
\date{}                    
\author[2]{Nadja Willenborg}
\author[1,3]{Martino Borello}
\author[2]{Anna-Lena Horlemann}
\author[2]{Habibul Islam}
\affil[1]{Universit\'e Paris 8, Laboratoire de G\'eom\'etrie, Analyse et Applications, LAGA, Universit\'e
Sorbonne Paris Nord, CNRS, UMR 7539, France}
\affil[2]{University of St.Gallen, Switzerland}
\affil[3]{INRIA, France}
\begin{document}

\maketitle
	
\thispagestyle{empty}
	
\begin{abstract}
We establish dihedral quantum codes of short block length, a class of CSS codes obtained by the lifted product construction. We present the code construction and give a formula for the code dimension, depending on the two classical codes that the CSS code is based on. We also give a lower bound on the code distance and construct an example of short dihedral quantum codes.
\end{abstract}

\section{Introduction} 
From \cite{calderbank1996good} it is well known that good quantum codes, correcting phase flip errors and bit flip errors, exist in the sense of having positive rate and linear minimum distance. However, in recent years, significant progress has been made in the theory of quantum LDPC (low density parity check) codes, i.e., codes with a sparse parity check matrix, achieving asymptotically good parameters. The constructions given in \cite{panteleev2021quantum, panteleev2022asymptotically} are based on lifted products over cyclic group algebras. \footnote{A group algebra $K[G]$ is formed from a group $G$ and a field $K$. Throughout this article we restrict to the finite field $\fq$. For more details see Section \ref{sec:groupalg}.} Lifted product is a construction which lifts matrices from a field $\fq$ to a larger algebra and has led to the discovery of families of quantum LDPC codes with both linearly growing distance and dimension. This article adds to this line of research by considering dihedral group algebras and shows how they naturally give rise to short nonabelian quantum MDPC (moderate density parity check) codes. \footnote{MDPC codes have parity check matrices with column weights in $O(\sqrt{N})$, where $N$ is the code length.}

Specifically, we will explore the construction of quantum codes using the dihedral group algebra $\fq[D_{2n}]$. Our focus on the dihedral group algebra allows us to use methods from \cite{martinez2015structure,vedenev2020relationship} to present a novel nonabelian quantum CSS code construction using dihedral groups. A CSS code is a quantum code built from two classical linear codes with additional orthogonality constraints on their parity check matrices. 

The paper is structured as follows: We start by introducing the necessary mathematical background. After defining lifted product codes, we start constructing dihedral lifted product codes from these in Section \ref{sec:dihedral}. In that section we also present the dimension formula and a distance bound for these codes. Afterwards, we use the dihedral group of order $180$ and are able to compute a concrete code example. In Section \ref{sec:final} we summarize our results and their potential future applications.
\\
\\
\textbf{Acknowledgments.}
 We would like to thank Markus Grassl, Virgile Guemard, Anthony Leverrier and Pavel Panteleev for useful comments. 
\section{Preliminaries}\label{sec:pre}

 In this paper, we adhere to classical notations for matrix spaces. Specifically, $\Mat_{m\times n}(\mathcal{H})$ denotes the space of $m \times n$-matrices with elements in an algebra $\mathcal{H}$ and we use $\Mat_{m}(\mathcal{H})$ when referring to square $m\times m$-matrices over $\mathcal{H}$. Furthermore, since we restrict ourselves to finite groups, our algebra will be denoted as $\fq[G]$, that is, the group algebra of $G$ over the finite field $\fq$. For any matrix $A$, we denote by $A^\top$ its transpose.

 The set of positive integers up to $n$ is denoted by $[n]$.

 \subsection{Classical Codes}
 A classical linear code $\mC$ with parameters $[n,k]_{q}$ is a $k$-dimensional vector space in $\fq^n$. The Hamming distance $d(v,v')$ between $v,v' \in \fq^n$ is the number of positions, where $v,v'$ differ. The parameter $$d(\mC)= \min \{d(v,v'): v \neq v', v,v' \in \mC\}$$
 is called the \emph{minimum (Hamming) distance} of $\mC$. A linear $[n,k]_{q}$ code $\mC$ with $d(\mC) =d$ is called an $[n,k,d]_{q}$ code. A linear $[n,k]_{q}$ code can be defined as the kernel of a matrix $H \in \fq^{(n-k) \times n}$ with $\rk H = n-k$, called a \emph{parity check matrix} of the code. The rows of $H$ are orthogonal to any vector in $\mC$.  The code defined by a parity check matrix $H$ is denoted by $\mC(H)$.

\subsection{Quantum CSS codes}
We consider the complex Hilbert space $\mathbb{C}^q$ of dimension $q$ and its $N$-fold tensor product $(\mathbb{C}^q)^{\otimes N}$, also known as \emph{$N$-qudit space}, where each factor $\mathbb{C}^{q}$ describes the state space of a single qudit. A quantum error correcting code of length $N$ and dimension $K$ is a $q^K$-dimensional subspace of $(\mathbb{C}^q)^{\otimes N}$; if it can correct up to $\lfloor (D-1)/2 \rfloor$ errors we denote it by $[[N,K,D]]_{q}$.
\emph{Calderbank-Shor-Steane} (CSS) codes form an important subclass of quantum error correcting codes \cite{calderbank1996good,steane1996simple}. A CSS code is defined by a pair of classical linear codes $\mC_X, \mC_Z \subseteq \fq^N$ and can be identified in the following way
\begin{equation} \label{eq:CSS_sum}
    Q(\mC_X,\mC_Z) := \mC_Z/ \mC_{X}^\perp \oplus \mC_X/ \mC_{Z}^\perp.
\end{equation}
The dimension of the CSS code $Q(\mC_X,\mC_Z)$ is $K=\dim \mC_X/ \mC_{Z}^\perp$, which can be reformulated as follows
\begin{equation}\label{eq:dimension}
    K = \dim \mC_{X} - \dim \mC_{Z}^\perp = \dim \ker H_{Z} - \rk H_{X} = N - \rk H_{Z}-  \rk H_{X}.
\end{equation}
Here, \(\mathcal{C}_X^{\perp}\) (resp.\ \(\mathcal{C}_Z^{\perp}\)) denotes the dual code of 
\(\mathcal{C}_X\) (resp.\ \(\mathcal{C}_Z\)), 
and \(\mathrm{rk}\,H_X\) (resp.\ \(\mathrm{rk}\,H_Z\)) is the rank of a parity check matrix 
for \(\mathcal{C}_X\) (resp.\ \(\mathcal{C}_Z\)).
On the other hand, given $K>0$, its minimum distance is given by $D= \min \{D_X, D_Z\},$ where
\begin{equation*}\label{eq:distance}
  D_Z:= \min_{c \in \mC_{Z}\setminus \mC_{X}^\perp}|c|, \quad D_X:= \min_{c \in \mC_{X}\setminus \mC_{Z}^\perp}|c|.
\end{equation*}
Even when $K=0$ the definition of the minimum distance remains well-defined. In fact, if $K=0$, it must be that either $\mC_{X}=\mC_{Z}^\perp$ or $\mC_{Z}=\mC_{X}^\perp.$ In the first case, the set 
$\mathcal{C}_X \setminus \mathcal{C}_Z^\perp$
is empty and we adopt the convention \(D_X = \infty\), so that \(D = D_Z\). In the second case, we set 
$D_Z = \infty$
and hence \(D = D_X\). In either scenario, provided the code is nontrivial (i.e., not the zero code), at least one of the sets 
$
\mathcal{C}_X \setminus \mathcal{C}_Z^\perp$ or $\mathcal{C}_Z \setminus \mathcal{C}_X^\perp$
is nonempty, ensuring that the minimum distance \(D\) is well defined.

Note that by considering two classical error correcting codes in the CSS construction, this definition enables the code to handle the two primary types of quantum errors: bit flip errors (quantified by $D_X$) and phase flip errors (quantified by $D_Z$).

To guarantee that the CSS code in \eqref{eq:CSS_sum} is well-defined, we need $\mC_{X}^\perp \subseteq \mC_Z$ (or equivalently $\mC_{Z}^\perp \subseteq \mC_X$). Let $H_{X}$ be a parity check matrix of $\mC_{X}$ and $H_{Z}$ be a parity check matrix of $\mC_{Z}$, then we can express this via the following orthogonality condition
\begin{equation} \label{eq:orthogonality}
    H_{X}H_{Z}^\top =0.
\end{equation}
Indeed, since the parity check matrix $H_X$ is the generator matrix of $\mC_{X}^\perp$ and $H_{Z}$ is a parity check matrix of $\mC_{Z}$, we have that the row space of $H_{X}$ is contained in $\mC_{Z}$, i.e., $\mC_{X}^{\perp} \subseteq \mC_{Z}.$

\subsection{Group codes} \label{sec:groupalg}

Let $G$ be a finite group with neutral element $e$  and $\fq$ be a field. The group algebra $\fq[G]$ over $\fq$ is the set 
$$\fq[G]:= \left\{ \sum_{g \in G}a_{g}g \mid a_{g} \in \fq\right\},$$
with the following operations for $a, b \in \fq[G]$, $a=\sum_{g \in G}a_{g}g, b=\sum_{g \in G}b_{g}g$, and $c\in \F_q$: 
$$a+b:= \sum_{g \in G}(a_{g}+b_{g})g,$$
$$c \cdot a := \sum_{g \in G}ca_{g}g,$$
$$a\cdot b:= \sum_{g \in G} \left( \sum_{\mu \nu = g}a_{\mu}b_{\nu} \right)g.$$

\begin{definition} \label{def:code_def}
  Let $G$ be a finite group with neutral element $e$. A \emph{left} group action of $G$ on $\mathbb{F}_q[G]$ is a function
  $\sigma: G \times \mathbb{F}_q[G] \to \mathbb{F}_q[G]$, satisfying
  \begin{itemize}
    \item $\sigma(e, x) = x$ for all $x \in \mathbb{F}_q[G]$, 
    \item $\sigma(gh, x) = \sigma\!\bigl(g,\,\sigma(h,x)\bigr)$ for all $g,h \in G$ and $x \in \mathbb{F}_q[G]$.
    \end{itemize}
    If in addition $\sigma$ is free, meaning that for every $g \in G \setminus\{e\}$ and every $x \in \mathbb{F}_q[G]$ we have $\sigma(g,x)\neq x$ then $\sigma$ is called a \emph{free left} group action.
    
 Let $\sigma: G \times \mathbb{F}_q[G] \to \mathbb{F}_q[G]$ be a free left group action. For a positive integer $\ell$, let 
  $\mathcal{C} \,\subseteq\, \bigl(\mathbb{F}_q[G]\bigr)^{\ell}$
  be an $\mathbb{F}_q$-submodule. 
  If $\mathcal{C}$ is \emph{invariant} under $\sigma$, \emph{i.e.},
  \[
    \forall\,g \in G \,\;\forall\,c \in \mathcal{C}
    \quad\colon\quad
    \sigma(g, c)\in \mathcal{C},
  \]
  where $\sigma$ acts componentwise on tuples in $\mathcal{C}$, then we call
  $\mathcal{C}$ a \emph{generalized quasi-group code of index $\ell$}. 
  If $\ell = 1$, then $\mathcal{C}$ is called a \emph{generalized group code}.
\end{definition}

We note that the above definitions can straight-forwardly be adopted to \emph{right} group actions.

\begin{remark}
    Note that the invariance of the quasi-group code $\mC$ under the group action $\sigma$ as defined implies invariance under specific types of group actions. If $\sigma$ acts in a manner similar to a left group action (i.e., $\sigma(g,x)=gx, \forall g \in G, \forall x \in \fq[G]$), then invariance under $\sigma$ implies invariance under the left group action $\lambda_{g} : x \mapsto gx$. On the other hand, if $\sigma$ behaves like the corresponding right group action (i.e., $\sigma(g,x)=g^{-1}x$), then invariance under $\sigma$ corresponds to invariance under the right group action $\rho_{g}: x \mapsto g^{-1}x)$. Consequently, every quasi-group code \( \mC \) can be regarded either as a left module (i.e., invariant under $\lambda_{g}$) or right module (i.e., invariant under $\rho_{g}$), depending on the specific behaviour of \( \sigma \) and which group action—\( \lambda_g \) or \( \rho_g \)—is more relevant for the context considered.
\end{remark}

To represent group codes as codes over $\F_q$ and use them in the CSS construction, we will use the following representations:
\begin{definition} \label{def:matrix_rep}
    Let $a \in \fq[G]$. The \emph{right} (resp.\ \emph{left}) \emph{regular matrix representation} with respect to a fixed basis of $\fq^{|G|}$
    is defined as the $|G| \times |G|$-matrix of the linear operator $\rho_{a} : \fq^{|G|} \rightarrow \fq^{|G|}, x \mapsto xa$ (resp.\ $\lambda_{a} : \fq^{|G|} \rightarrow \fq^{|G|}, x \mapsto ax$). We denote the right regular matrix representation of $a$ by $R(a)$ and its left regular matrix representation by $L(a).$ Clearly, when $\fq[G]$ is commutative we do not need to distinguish between left and right regular representations. In this case we simply denote the corresponding matrix over $\fq$ by $\mathbf{a}$, and for $A \in \Mat_{m}(\fq[G])$ we denote its corresponding matrix over $\fq$ by $\mathbf{A}$.
\end{definition}

The following will be useful in our code construction:
\begin{proposition} \label{prop:comm}
    For any $a,b \in \fq[G]$
    $$L(a)^\top R(b)= R(b)L(a)^\top.$$
    \begin{proof}
     Since multiplication in $\fq[G]$ is associative we have for any $x \in \fq[G]$ that
     $a(xb)=(ax)b$ holds. This shows for any $x \in \fq^{|G|}$ that applying $L(a)^\top$ after $R(b)$ on $x$ gives the same result as first applying $L(a)^\top$ on $x$ and then $R(b).$
    \end{proof}
\end{proposition}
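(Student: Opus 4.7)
The plan is to reduce the matrix identity to the associativity of multiplication in the group algebra $\fq[G]$. By Definition \ref{def:matrix_rep}, $L(a)$ and $R(b)$ are the matrices of the linear operators $\lambda_a : x \mapsto ax$ and $\rho_b : x \mapsto xb$ on $\fq^{|G|}$. Under the convention used in the definition, $L(a)^\top$ acts on the column vector representing $x \in \fq[G]$ by producing the column vector representing $ax$, while $R(b)$ produces the column vector representing $xb$. The first step is to make this dictionary precise, since the transpose on $L(a)$ is what carries the ``act-on-columns-by-left-multiplication'' interpretation.

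With this identification in hand, the matrix equality $L(a)^\top R(b) = R(b) L(a)^\top$ translates into the assertion that $a(xb) = (ax)b$ for every $x \in \fq[G]$. Evaluating both sides on the basis vector $e_h$ corresponding to $h \in G$: the composition $L(a)^\top R(b)$ first sends $e_h$ to the vector representing $hb$, then to the vector representing $a(hb)$; the composition $R(b) L(a)^\top$ first sends $e_h$ to the vector representing $ah$, then to the vector representing $(ah)b$. These agree by associativity of $\fq[G]$ for every $h \in G$, and hence the two matrices coincide.

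The only real subtlety is the bookkeeping that identifies $L(a)^\top$ (and not $L(a)$) with the column-vector action of left multiplication by $a$; once this convention is fixed, the statement is a one-line consequence of the associative axiom. A more pedestrian alternative, if the convention feels slippery, would be an entry-by-entry verification using explicit formulas such as $L(a)_{g,h} = a_{gh^{-1}}$ and $R(b)_{g,h} = b_{h^{-1}g}$, followed by a change of summation variable in the product; this, however, is just the same associativity computation recorded in coordinates.
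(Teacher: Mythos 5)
Your proposal is correct and takes essentially the same route as the paper: both reduce the matrix identity to the associativity law $a(xb)=(ax)b$ in $\fq[G]$, with the transpose on $L(a)$ serving only as the bookkeeping that lets left and right multiplication act compatibly on coordinate vectors. Your extra care in fixing the row/column convention (and the optional entry-by-entry check) is a harmless elaboration of the same argument.
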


It is well known (see for example \cite[Chapter 16]{huffman2021concise}) that  if $\textnormal{char} (\fq) \nmid |G| $, all group codes over $\fq[G]$ are \emph{principal}, i.e., 
that there exists $c \in \fq[G]$ with $\mC = \fq[G]c.$

\subsection{Lifted product construction}

The lifted product was introduced in \cite{panteleev2021quantum} and formalizes many known constructions of quantum codes. The idea is to lift the elements in matrices over $\fq$ up to some ring $R$ that is also a finite dimensional $\fq$-algebra. To define these codes we need the Kronecker product over $\fq[G]$ and the conjugate transpose of a matrix $H \in \Mat_{m \times n}(\fq[G])$. We start by defining these concepts in the context of group algebras. 

\begin{definition} \label{def:conjugate}
Let $a\in \fq[G]$ such that $\displaystyle a= \sum_{g\in G}a_{g}g$. Then its \emph{reciprocal} $a^{*}$ is defined as
\begin{equation}\label{eq:involution}
    a^{*} := \sum_{g\in G}a_{g^{-1}}g.
\end{equation} 
    If $H=(h_{i,j})_{1 \le i\le m, 1\le j \le n}$ is a matrix over $\fq [G]$ we define its \emph{conjugate transpose} as $H^{*}:= (h_{j,i}^{*})_{1 \le j \le n, 1 \le i \le m}$, where $h_{j,i}^{*}$ is the reciprocal of $h_{i,j} \in \fq[G]$.
\end{definition}
\begin{remark} \label{rem:conjugate} 
\label{rem:reciprocal}
Note that the Hamming weight is invariant under taking the reciprocal.
Moreover, to see that \((a + b)^* = a^* + b^*\) for any \(a,b \in \mathbb{F}_q[G]\),
observe that the map \(g \mapsto g^{-1}\) is a bijection on \(G\). Hence, we can
reindex the sum in \((a + b)^*\) by replacing each summation index \(g\) with \(g^{-1}\),
which shows
\[
  (a + b)^*
  \;=\;
  \sum_{g \in G} (a_g + b_g)\,g^{-1}
  \;=\;
  \sum_{g \in G} a_g\,g^{-1}
  \;+\;
  \sum_{g \in G} b_g\,g^{-1}
  \;=\;
  a^* + b^*.
\]
\end{remark}

\begin{lemma} Let $a \in \mathbb{F}_{q}[G]$. Then
$$R(a^{*})=R(a)^\top \text{ and  }L(a^{*})=L(a)^\top.$$
\end{lemma}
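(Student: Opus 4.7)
The plan is to verify the two identities by computing explicit formulas for matrix entries in the natural basis $\{e_g : g \in G\}$ of $\fq^{|G|}$, then matching them. Writing $a = \sum_{g \in G} a_g g$, I would expand $\rho_a(e_h) = ha = \sum_{g \in G} a_g\,(hg)$ and collect the coefficient of $e_k$ via the substitution $g = h^{-1}k$; this gives $R(a)_{k,h} = a_{h^{-1}k}$. The symmetric computation for $\lambda_a(e_h) = ah$ yields $L(a)_{k,h} = a_{kh^{-1}}$.

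Next, I would apply these same formulas with $a$ replaced by $a^*$. Since the map $g \mapsto g^{-1}$ is a bijection of $G$ (as already used in Remark \ref{rem:reciprocal}), the coefficient of $g$ in $a^* = \sum_{h \in G} a_{h^{-1}} h$ equals $a_{g^{-1}}$. Substituting, I get
\[
    R(a^*)_{k,h} = a_{(h^{-1}k)^{-1}} = a_{k^{-1}h}, \qquad L(a^*)_{k,h} = a_{(kh^{-1})^{-1}} = a_{hk^{-1}}.
\]
Finally, comparing with the entries of the transposes, $R(a)^\top_{k,h} = R(a)_{h,k} = a_{k^{-1}h}$ and $L(a)^\top_{k,h} = L(a)_{h,k} = a_{hk^{-1}}$, the two claimed identities follow entry by entry.

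There is no real obstacle here: the content is a routine index calculation, and the only care needed is to fix a consistent convention for which group element indexes the row versus the column of the regular matrix representations, and to stay consistent between the right and left sides. Once this bookkeeping is settled, the fact that $g\mapsto g^{-1}$ is a bijection does all the work. An even cleaner presentation would observe that $*$ is an anti-involution on $\fq[G]$ compatible with the regular representations, but the direct entrywise check is shorter and keeps the proof self-contained.
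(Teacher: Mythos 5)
Your proof is correct and follows essentially the same route as the paper: an entrywise computation of the coefficients of the regular representations, matched up via the bijection $g \mapsto g^{-1}$. If anything, your version is slightly more careful, since you write out both the left and right cases explicitly, whereas the paper spells out only one and leaves the other symmetric.
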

\begin{proof}
In fact, since $g_{i}g=g_{j}$ implies $g= g_{i}^{-1}g_{j}$, the coefficient of $g_{j}$ in the product $g_{i}a$ is $a_{g_{i}^{-1}g_{j}}$. On the other hand, the coefficient of $g_{i}$ in $g_{j}a^{*}$ is $a_{(g_{j}^{-1}g_{i})^{-1}}$. Hence the two coefficients are equal, which shows that the $(j,i)$-th entry of $L(a)$ is equal to the $(i,j)$-th entry of $L(a^{*})$. 
\end{proof}

The lifted product construction is based on the well known Kronecker product:
\begin{definition}\label{def:Kronecker}
Let $A \in \textnormal{Mat}_{m_{A}\times n_{A}}(\fq[G])$ and $B \in \textnormal{Mat}_{m_{B}\times n_{B}}(\fq[G])$. Then the \emph{Kronecker product} of $A$ and $B$ is the $m_{A}m_{B} \times n_{A}n_{B}$ block matrix $A \otimes B$ given by
$$A \otimes B = \begin{pmatrix}
a_{11}B & a_{12}B & \cdots & a_{1n}B \\
a_{21}B & a_{22}B & \cdots & a_{2n}B \\
\vdots  & \vdots  & \ddots & \vdots  \\
a_{m1}B & a_{m2}B & \cdots & a_{mn}B
\end{pmatrix}.$$
\end{definition}

\begin{proposition} \label{prop:well-def}
Let $A \in \textnormal{Mat}_{m_{A}\times n_{A}}(\fq[G]), B \in \textnormal{Mat}_{m_{B}\times n_{B}}(\fq[G])$ and define
\begin{align*}
    {}^{\natural}A&:= [L(a_{ij})^\top]_{1 \le i \le m_{A}, 1 \le j \le n_{A}} \in \Mat_{m_{A} \times n_{A}}(\fq^{|G| \times |G|}),\\ \quad B^{\natural}&:= [R(b_{ij})]_{1 \le i \le m_{B}, 1 \le j \le n_{B}} \in \Mat_{m_{B} \times n_{B}}(\fq^{|G| \times |G|}).
\end{align*}

Moreover we form block matrices\footnote{Here the notation $[A,B]$ denotes the $m \times (n_{A}+n_{B})$ block matrix by placing $B$ to the right of $A$. 
}
\begin{equation} \label{eq:parity_fq}
    H_{X}^{\natural} := \left[  {}^{\natural}A \otimes I_{m_B}, -I_{m_A} \otimes B^{\natural} \right], \quad H_Z^{\natural} := \left[ I_{n_A} \otimes {B^{\natural}}^\top, { {}^{\natural}A}^\top \otimes I_{n_B} \right].
\end{equation}
Then we have
$$ H_X^{\natural} {H_Z^\natural}^\top = 0 .$$ 
 \end{proposition}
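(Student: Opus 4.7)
The plan is to compute $H_X^\natural (H_Z^\natural)^\top$ directly, relying on block-matrix arithmetic, on the behaviour of the Kronecker product under transposition and multiplication, and crucially on Proposition \ref{prop:comm}.

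I will first transpose $H_Z^\natural$. Using $(X \otimes Y)^\top = X^\top \otimes Y^\top$ on each of the two blocks gives
\[
  (H_Z^\natural)^\top \;=\; \begin{pmatrix} I_{n_A} \otimes B^\natural \\ {}^\natural A \otimes I_{n_B} \end{pmatrix},
\]
and the block multiplication with $H_X^\natural$ then yields
\[
  H_X^\natural (H_Z^\natural)^\top \;=\; \bigl({}^\natural A \otimes I_{m_B}\bigr)\bigl(I_{n_A} \otimes B^\natural\bigr) \;-\; \bigl(I_{m_A} \otimes B^\natural\bigr)\bigl({}^\natural A \otimes I_{n_B}\bigr).
\]

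Next I will apply the mixed-product rule $(X \otimes Y)(Z \otimes W) = (XZ) \otimes (YW)$ to each of the two terms. Because the entries involved are $|G| \times |G|$ matrices rather than scalars, this identity is only valid provided that the inner factor $Y$ and the outer factor $Z$ commute entry-wise. In the first term both of these are scalar identity matrices, so the rule applies trivially and the product collapses to ${}^\natural A \otimes B^\natural$. In the second term the entries of the two non-identity factors are of the form $R(b)$ and $L(a)^\top$; Proposition \ref{prop:comm} tells us exactly that these commute, so the mixed-product rule applies and the second product also collapses to ${}^\natural A \otimes B^\natural$. The two terms cancel, giving $H_X^\natural (H_Z^\natural)^\top = 0$.

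The main obstacle is conceptual rather than computational: one must verify that the familiar mixed-product identity for Kronecker products, routine over a commutative base, remains valid here precisely because of the commutation relation supplied by Proposition \ref{prop:comm}. If one prefers to avoid appealing to that identity, an equivalent route is a direct entry-wise calculation: the $((i,p),(j,q))$-entry of the first product is $({}^\natural A)_{ij}(B^\natural)_{pq} = L(A_{ij})^\top R(B_{pq})$, that of the second is $(B^\natural)_{pq}({}^\natural A)_{ij} = R(B_{pq}) L(A_{ij})^\top$, and Proposition \ref{prop:comm} equates them, so the difference vanishes entry by entry.
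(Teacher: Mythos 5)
Your proposal is correct and follows essentially the same route as the paper: expand $H_X^\natural (H_Z^\natural)^\top$ into the difference of the two block products and cancel them using Proposition \ref{prop:comm}, which is exactly what the paper does by writing out the block matrices explicitly and comparing entries $a_{ij}^\natural b_{pq}^\natural$ with $b_{pq}^\natural a_{ij}^\natural$. Your remark that the mixed-product rule needs the entry-level commutativity supplied by Proposition \ref{prop:comm} (since the blocks are $|G|\times|G|$ matrices, not scalars) is the right caveat, and your entry-wise fallback is precisely the paper's argument in compressed form.
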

 \begin{proof}
We obtain
\begin{align*}
   H_{X}^{\natural}{H_{Z}^{\natural}}^\top&= (A^{\natural} \otimes I_{m_{B}})(I_{n_{A}}\otimes B^{\natural})+(-I_{m_{A}}\otimes B^{\natural})(A^{\natural} \otimes I_{n_{B}})\\
   &= 
\scalebox{0.75}{$\left(
\begin{array}{c|c|c}
\begin{matrix}
    a_{11}^{\natural} & \cdots &  \mathbf{0} \\
     %\hat{a}_{11}\hat{b}_{21} &  \hat{a}_{11}\hat{b}_{22}& \cdots &  \hat{a}_{11}\hat{b}_{2n_{B}} \\
    \vdots  & \ddots & \vdots \\
     \mathbf{0} &   \cdots &  a_{11}^{\natural}
\end{matrix} &\cdots & \begin{matrix}
    a_{1n_{A}}^{\natural}& \cdots &  \mathbf{0} \\
     %\hat{a}_{11}\hat{b}_{21} &  \hat{a}_{11}\hat{b}_{22}& \cdots &  \hat{a}_{11}\hat{b}_{2n_{B}} \\
    \vdots  & \ddots & \vdots \\
    \mathbf{0} &   \cdots &  a_{1n_{A}}^{\natural}
\end{matrix} \\
\hline
\vdots  & \ddots & \vdots\\
\hline
\begin{matrix}
    a_{m_{A}1}^{\natural}& \cdots &  \mathbf{0} \\
     %\hat{a}_{11}\hat{b}_{21} &  \hat{a}_{11}\hat{b}_{22}& \cdots &  \hat{a}_{11}\hat{b}_{2n_{B}} \\
    \vdots  & \ddots & \vdots \\
     \mathbf{0} &   \cdots &  a_{m_{A}1}^{\natural}
\end{matrix} &\cdots & \begin{matrix}
    a_{m_{A}n_{A}}^{\natural}& \cdots &  \mathbf{0} \\
     %\hat{a}_{11}\hat{b}_{21} &  \hat{a}_{11}\hat{b}_{22}& \cdots &  \hat{a}_{11}\hat{b}_{2n_{B}} \\
    \vdots  & \ddots & \vdots \\
   \mathbf{0}&   \cdots &  a_{m_{A}n_{A}}^{\natural}
\end{matrix} 
\end{array}
\right)
\left(
\begin{array}{c|c|c}
\begin{matrix}
    b_{11}^{\natural} & \cdots &  b_{1n_{B}}^{\natural} \\
     %\hat{a}_{11}\hat{b}_{21} &  \hat{a}_{11}\hat{b}_{22}& \cdots &  \hat{a}_{11}\hat{b}_{2n_{B}} \\
    \vdots  & \ddots & \vdots \\
     b_{m_{B}1}^{\natural} &   \cdots &  b_{m_{B}n_{B}}^{\natural}
\end{matrix} &\cdots & \begin{matrix}
\mathbf{0}
\end{matrix} \\
\hline
\vdots  & \ddots & \vdots\\
\hline
\begin{matrix}
  \mathbf{0}
\end{matrix} &\cdots & \begin{matrix}
     b_{11}^{\natural} & \cdots &  b_{1n_{B}}^{\natural} \\
     %\hat{a}_{11}\hat{b}_{21} &  \hat{a}_{11}\hat{b}_{22}& \cdots &  \hat{a}_{11}\hat{b}_{2n_{B}} \\
    \vdots  & \ddots & \vdots \\
     b_{m_{B}1}^{\natural} &   \cdots &  b_{m_{B}n_{B}}^{\natural}
\end{matrix} 
\end{array}
\right)$}
\\
\\
&- \scalebox{0.75}{$\left(
\begin{array}{c|c|c}
\begin{matrix}
    b_{11}^{\natural} & \cdots &  b_{1n_{B}}^{\natural} \\
     %\hat{a}_{11}\hat{b}_{21} &  \hat{a}_{11}\hat{b}_{22}& \cdots &  \hat{a}_{11}\hat{b}_{2n_{B}} \\
    \vdots  & \ddots & \vdots \\
     b_{m_{B}1}^{\natural} &   \cdots &  b_{m_{B}n_{B}}^{\natural}
\end{matrix} &\cdots & \begin{matrix}
\mathbf{0}
\end{matrix} \\
\hline
\vdots  & \ddots & \vdots\\
\hline
\begin{matrix}
  \mathbf{0}
\end{matrix} &\cdots & \begin{matrix}
     b_{11}^{\natural} & \cdots &  b_{1n_{B}}^{\natural} \\
     %\hat{a}_{11}\hat{b}_{21} &  \hat{a}_{11}\hat{b}_{22}& \cdots &  \hat{a}_{11}\hat{b}_{2n_{B}} \\
    \vdots  & \ddots & \vdots \\
     b_{m_{B}1}^{\natural} &   \cdots &  b_{m_{B}n_{B}}^{\natural}
\end{matrix} 
\end{array}
\right)
\left(
\begin{array}{c|c|c}
\begin{matrix}
    a_{11}^{\natural} & \cdots &  \mathbf{0} \\
     %\hat{a}_{11}\hat{b}_{21} &  \hat{a}_{11}\hat{b}_{22}& \cdots &  \hat{a}_{11}\hat{b}_{2n_{B}} \\
    \vdots  & \ddots & \vdots \\
     \mathbf{0} &   \cdots &  a_{11}^{\natural}
\end{matrix} &\cdots & \begin{matrix}
    a_{1n_{A}}^{\natural}& \cdots &  \mathbf{0} \\
     %\hat{a}_{11}\hat{b}_{21} &  \hat{a}_{11}\hat{b}_{22}& \cdots &  \hat{a}_{11}\hat{b}_{2n_{B}} \\
    \vdots  & \ddots & \vdots \\
    \mathbf{0} &   \cdots &  a_{1n_{A}}^{\natural}
\end{matrix} \\
\hline
\vdots  & \ddots & \vdots\\
\hline
\begin{matrix}
    a_{m_{A}1}^{\natural}& \cdots &  \mathbf{0} \\
     %\hat{a}_{11}\hat{b}_{21} &  \hat{a}_{11}\hat{b}_{22}& \cdots &  \hat{a}_{11}\hat{b}_{2n_{B}} \\
    \vdots  & \ddots & \vdots \\
     \mathbf{0} &   \cdots &  a_{m_{A}1}^{\natural}
\end{matrix} &\cdots & \begin{matrix}
    a_{m_{A}n_{A}}^{\natural}& \cdots &  \mathbf{0} \\
     %\hat{a}_{11}\hat{b}_{21} &  \hat{a}_{11}\hat{b}_{22}& \cdots &  \hat{a}_{11}\hat{b}_{2n_{B}} \\
    \vdots  & \ddots & \vdots \\
   \mathbf{0}&   \cdots &  a_{m_{A}n_{A}}^{\natural}
\end{matrix} 
\end{array}
\right)$}
 \\
 \\
&= \scalebox{0.75}{$\left(
\begin{array}{c|c|c|c}
\begin{matrix}
    a_{11}^{\natural}b_{11}^{\natural}& \cdots &  a_{11}^{\natural}b_{1n_{B}}^{\natural} \\
     %\hat{a}_{11}\hat{b}_{21} &  \hat{a}_{11}\hat{b}_{22}& \cdots &  \hat{a}_{11}\hat{b}_{2n_{B}} \\
    \vdots  & \ddots & \vdots \\
     a_{11}^{\natural}b_{m_{B}1}^{\natural} &   \cdots &  a_{11}^{\natural}b_{m_{B}n_{B}}^{\natural}
\end{matrix} & \begin{matrix}
    a_{12}^{\natural}b_{11}^{\natural}& \cdots &  a_{12}^{\natural}b_{1n_{B}}^{\natural} \\
     %\hat{a}_{11}\hat{b}_{21} &  \hat{a}_{11}\hat{b}_{22}& \cdots &  \hat{a}_{11}\hat{b}_{2n_{B}} \\
    \vdots  & \ddots & \vdots \\
     a_{12}^{\natural}b_{m_{B}1}^{\natural} &   \cdots &  a_{12}^{\natural}b_{m_{B}n_{B}}^{\natural}
\end{matrix} &\cdots & \begin{matrix}
    a_{1n_{A}}^{\natural}b_{11}^{\natural}& \cdots &  a_{1n_{A}}^{\natural}b_{1n_{B}}^{\natural} \\
     %\hat{a}_{11}\hat{b}_{21} &  \hat{a}_{11}\hat{b}_{22}& \cdots &  \hat{a}_{11}\hat{b}_{2n_{B}} \\
    \vdots  & \ddots & \vdots \\
     a_{1n_{A}}^{\natural}b_{m_{B}1}^{\natural} &   \cdots &  a_{1n_{A}}^{\natural}b_{m_{B}n_{B}}^{\natural}
\end{matrix} \\
\hline
\begin{matrix}
    a_{21}^{\natural}b_{11}^{\natural}& \cdots &  a_{21}^{\natural}b_{1n_{B}}^{\natural} \\
     %\hat{a}_{11}\hat{b}_{21} &  \hat{a}_{11}\hat{b}_{22}& \cdots &  \hat{a}_{11}\hat{b}_{2n_{B}} \\
    \vdots  & \ddots & \vdots \\
     a_{21}^{\natural}b_{m_{B}1}^{\natural} &   \cdots &  a_{21}^{\natural}b_{m_{B}n_{B}}^{\natural}
\end{matrix}  & \vdots & \ddots & \vdots \\
\hline
\vdots & \vdots & \ddots & \vdots\\
\hline
\begin{matrix}
    a_{m_{A}1}^{\natural}b_{11}^{\natural}& \cdots & a_{m_{A}1}^{\natural}b_{1n_{B}}^{\natural} \\
     %\hat{a}_{11}\hat{b}_{21} &  \hat{a}_{11}\hat{b}_{22}& \cdots &  \hat{a}_{11}\hat{b}_{2n_{B}} \\
    \vdots  & \ddots & \vdots \\
     a_{m_{A}1}^{\natural}b_{m_{B}1}^{\natural} &   \cdots &  a_{m_{A}1}^{\natural}b_{m_{B}n_{B}}^{\natural}
\end{matrix} &\cdots & \cdots & \begin{matrix}
    a_{m_{A}n_{A}}^{\natural}b_{11}^{\natural}& \cdots &  a_{m_{A}n_{A}}^{\natural}b_{1n_{B}}^{\natural} \\
     %\hat{a}_{11}\hat{b}_{21} &  \hat{a}_{11}\hat{b}_{22}& \cdots &  \hat{a}_{11}\hat{b}_{2n_{B}} \\
    \vdots  & \ddots & \vdots \\
     a_{m_{A}n_{A}}^{\natural}b_{m_{B}1}^{\natural} &   \cdots &  a_{m_{A}n_{A}}^{\natural}b_{m_{B}n_{B}}^{\natural}
\end{matrix} 
\end{array}
\right)$}
\\
\\
&- \scalebox{0.75}{$\left(
\begin{array}{c|c|c|c}
\begin{matrix}
    b_{11}^{\natural} a_{11}^{\natural} & \cdots & b_{1n_B}^{\natural} a_{11}^{\natural} \\
    \vdots & \ddots & \vdots \\
    b_{m_B1}^{\natural} a_{11}^{\natural} & \cdots & b_{m_Bn_B}^{\natural} a_{11}^{\natural}
\end{matrix} &
\begin{matrix}
    b_{11}^{\natural} a_{12}^{\natural} & \cdots & b_{1n_B}^{\natural} a_{12}^{\natural} \\
    \vdots & \ddots & \vdots \\
    b_{m_B1}^{\natural} a_{12}^{\natural} & \cdots & b_{m_Bn_B}^{\natural} a_{12}^{\natural}
\end{matrix} & \cdots & 
\begin{matrix}
    b_{11}^{\natural} a_{1n_A}^{\natural} & \cdots & b_{1n_B}^{\natural} a_{1n_A}^{\natural} \\
    \vdots & \ddots & \vdots \\
    b_{m_B1}^{\natural} a_{1n_A}^{\natural} & \cdots & b_{m_Bn_B}^{\natural} a_{1n_A}^{\natural}
\end{matrix} \\
\hline
\begin{matrix}
    b_{11}^{\natural} a_{21}^{\natural} & \cdots & b_{1n_B}^{\natural} a_{21}^{\natural} \\
    \vdots & \ddots & \vdots \\
    b_{m_B1}^{\natural} a_{21}^{\natural} & \cdots & b_{m_Bn_B}^{\natural} a_{21}^{\natural}
\end{matrix} & \vdots & \ddots & \vdots \\
\hline
\vdots & \vdots & \ddots & \vdots \\
\hline
\begin{matrix}
   b_{11}^{\natural} a_{m_A1}^{\natural} & \cdots & b_{1n_B}^{\natural} a_{m_A1}^{\natural} \\
    \vdots & \ddots & \vdots \\
   b_{m_B1}^{\natural} a_{m_A1}^{\natural} & \cdots & b_{m_Bn_B}^{\natural} a_{m_A1}^{\natural}
\end{matrix} & \cdots & \cdots & 
\begin{matrix}
    b_{11}^{\natural} a_{m_An_A}^{\natural} & \cdots & b_{1n_B}^{\natural} a_{m_An_A}^{\natural} \\
    \vdots & \ddots & \vdots \\
    b_{m_B1}^{\natural} a_{m_An_A}^{\natural} & \cdots & b_{m_Bn_B}^{\natural} a_{m_An_A}^{\natural}
\end{matrix} 
\end{array}
\right)$}\\
&= 0,
\end{align*}
where the last equality follows from Proposition \ref{prop:comm}, using that ${}^{\natural}a_{ij}=L(a_{i,j})^\top, b_{ij}^{\natural}=R(b_{i,j})$. 
      \end{proof}

\begin{definition}\label{def:LP} (Lifted Product Construction, see \cite{panteleev2021quantum})
       Let $A \in \textnormal{Mat}_{m_{A}\times n_{A}}(\fq[G])$ and $B \in \textnormal{Mat}_{m_{B}\times n_{B}}(\fq[G])$, and define matrices over $\fq$ by 
   \begin{equation*}
H_X^\natural = \Bigl[ {}^\natural A \otimes I_{m_B},\; -I_{m_A}\otimes B^\natural \Bigr], \quad
H_Z^\natural = \Bigl[ I_{n_A}\otimes (B^\natural)^\top,\; ({}^\natural A)^\top\otimes I_{n_B} \Bigr],
\end{equation*}
where ${}^\natural A := \Bigl[ L(a_{ij})^\top \Bigr]_{1\le i\le m_A,\, 1\le j\le n_A} \quad\text{and}\quad B^\natural := \Bigl[ R(b_{ij}) \Bigr]_{1\le i\le m_B,\, 1\le j\le n_B}$ are the matrices where each entry $a_{ij}$ of $A$ (respectively $b_{ij}$ of $B$) is replaced by its left, respectively right regular matrix representation. We then define the lifted product code $LP(A,B)$ as the quantum CSS code with parity-check matrices \(H_X^\natural\) and \(H_Z^\natural\). Proposition \ref{prop:well-def} guarantees that $LP(A,B)$ is a well-defined quantum CSS code. 

\end{definition}

\begin{remark}
    Definition \ref{def:LP} is a reformulation of the lifted product construction introduced in \cite{panteleev2021quantum,panteleev2022asymptotically}. In these works, the authors introduce a method based on chain complexes to produce new quantum LDPC codes by taking a tensor product over a finite dimensional algebra $R$. Specifically, our requirement that $A$ be a right $R$-module and $B$ be a left $R$-module commuting with the free group action, reproduces the lifted product codes from \cite{panteleev2021quantum,panteleev2022asymptotically} when $R$ is the group algebra $\fq[G]$. Thus Definition \ref{def:LP} can be seen as a natural, more concrete restatement of their general framework in the context of free group actions.
    \end{remark}
  One interesting fact about lifted product codes as defined above is that they are moderate density parity check (MDPC) codes\footnote{These codes are particularly interesting in the area of code based cryptography.} (i.e., codes who have a parity check matrix whose rows have Hamming weights in $O(\sqrt{N})$, see e.g. \cite{bariffi2022moderate}), if $m_B$ and $m_A$ are in the same order of magnitude as $n_A$ and $n_B$, respectively. We will prove the statement for the case $m_B=n_A, m_A=n_B$, but the analog holds for $m_B\in O(n_A), m_A\in O(n_B)$. 
     
\begin{proposition}
    Let  $A \in \textnormal{Mat}_{n_{B}\times n_{A}}(\fq[G])$ and $B \in \textnormal{Mat}_{n_{A}\times n_{B}}(\fq[G])$. Then an $LP(A,B)$ code with parity check matrices $H_{X}, H_{Z}$, as defined in \eqref{eq:parity_fq}, is a moderate density parity check (MDPC) code.
    \end{proposition}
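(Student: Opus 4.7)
The plan is to read the code length $N$ off the column count of $H_X^\natural$, bound the Hamming weight of an arbitrary row in terms of the entry weights of $A$ and $B$, and then compare the two quantities. Viewed over $\fq$, the matrix $H_X^\natural=[\,{}^\natural A\otimes I_{n_A},\;-I_{n_B}\otimes B^\natural\,]$ has $n_A^2|G|+n_B^2|G|=(n_A^2+n_B^2)|G|$ columns, so $N=(n_A^2+n_B^2)|G|$ and $\sqrt{N}=\sqrt{n_A^2+n_B^2}\,\sqrt{|G|}$.

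The key combinatorial input is that for every $a\in\fq[G]$, each row and each column of $L(a)$ (respectively $R(a)$) contains \emph{exactly} $w_H(a)$ non-zero entries, writing $w_H(a)$ for the Hamming weight of $a$ as an element of $\fq[G]$. This is because left (respectively right) multiplication by a single group element is a bijection of $G$, so $L(a)$ decomposes as a sum of $w_H(a)$ permutation matrices with pairwise disjoint supports. Consequently, the row of ${}^\natural A$ (viewed over $\fq$) indexed by a block row $i\in[n_B]$ and any inner row $\ell\in[|G|]$ has weight $\sum_{j=1}^{n_A} w_H(a_{ij})$, independently of $\ell$. Kroneckering with an identity preserves row weights, so the same count persists in ${}^\natural A\otimes I_{n_A}$; an analogous statement for $I_{n_B}\otimes B^\natural$ gives row weight $\sum_{j=1}^{n_B} w_H(b_{i'j})$ for a block-row index $i'\in[n_A]$. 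Since a row of $H_X^\natural$ is the concatenation of such rows, its Hamming weight is at most
\[
\max_{i\in[n_B]}\sum_{j=1}^{n_A} w_H(a_{ij})\;+\;\max_{i'\in[n_A]}\sum_{j=1}^{n_B} w_H(b_{i'j})\;\le\;(n_A+n_B)\,W,
\]
where $W:=\max_{i,j}\{w_H(a_{ij}),\,w_H(b_{ij})\}$.

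Finally, applying the elementary inequality $(n_A+n_B)\le\sqrt{2(n_A^2+n_B^2)}$ upgrades this bound to $W\sqrt{2\,N/|G|}$. Hence, under the natural MDPC-type hypothesis that the entries of $A$ and $B$ are sparse enough to satisfy $W\in O(\sqrt{|G|})$, every row of $H_X^\natural$ has weight $O(\sqrt{N})$, which is exactly the MDPC condition. The identical argument applied to $H_Z^\natural=[\,I_{n_A}\otimes(B^\natural)^\top,\,({}^\natural A)^\top\otimes I_{n_B}\,]$, which is assembled from the same blocks (transposed, with the same row and column weights), yields the same bound. The main obstacle I anticipate is precisely this scaling condition on the entries: for generic group-algebra elements with $w_H$ of order $|G|$, row weights grow like $(n_A+n_B)|G|$, far above $O(\sqrt{N})$, so some sparsity assumption on the entries of $A$ and $B$ is essential for the conclusion.
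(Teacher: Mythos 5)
Your combinatorial analysis is correct and in fact sharper than what the paper uses: the observation that $L(a)$ and $R(a)$ split into $w_H(a)$ disjoint permutation matrices, so that every row of ${}^\natural A\otimes I_{n_A}$ has weight exactly $\sum_j w_H(a_{ij})$, is a finer count than the paper's crude bound $(n_A+n_B)|G|$ on the row weights. The gap is in the last step, where you conclude that a sparsity hypothesis $W\in O(\sqrt{|G|})$ on the entries is \emph{essential}. It is not, because you have misidentified the asymptotic regime. The paper states explicitly (in a footnote to its proof) that the order of the group is \emph{fixed} and only $n_A$ or $n_B$ tends to infinity; $|G|$ is a constant in the $O(\cdot)$ notation. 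Under that convention $W\le |G|=O(1)$, and your own inequality
\[
\omega_X \;\le\; (n_A+n_B)\,W \;\le\; |G|\sqrt{2(n_A^2+n_B^2)} \;=\; \sqrt{2|G|}\,\sqrt{N}
\]
already finishes the proof with no additional assumption: the constant $\sqrt{2|G|}$ is absorbed into the $O(\sqrt N)$. This is exactly the paper's argument, which verifies $\limsup \omega_X^2/N\le |G|<\infty$ as $n_A$ or $n_B\to\infty$ and concludes $\omega_X\in O(\sqrt N)$.

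Your objection would be valid in the regime more common in the quantum LDPC literature, where the base matrices $A,B$ are fixed and $|G|\to\infty$; there one indeed needs $w_H$ of the entries to be $o(|G|)$ (in fact $O(\sqrt{|G|})$) for an MDPC-type bound. But that is not the regime of this proposition. So: keep your weight count, replace the final paragraph by the observation that $W\le|G|$ is constant in the intended limit, and the proof closes unconditionally.
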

    \begin{proof}
     Again we create the matrices $$A^{\natural} \in \Mat_{n_{B} \times n_{A}}(\fq^{|G| \times |G|}), \quad B^{\natural} \in \Mat_{n_{A} \times n_{B}}(\fq^{|G| \times |G|}),$$
by replacing the elements $a_{ij}$ of $A$, (respectively $b_{ij}$ of $B$) by $L(a_{ij})^\top$, (respectively $R(b_{ij}))$ and the analogs of $H_{X}$ and $H_{Z}$ in the following way
$$H_{X}^{\natural} = \left[ A^{\natural} \otimes I_{n_A}, -I_{n_B} \otimes B^{\natural} \right], \quad H_Z^{\natural} = \left[ I_{n_A} \otimes {B^{\natural}}^\top, {A^{\natural}}^\top \otimes I_{n_B} \right].
$$  Let $\omega_{X}, \omega_{Z}$ be the maximal row weights of $H_{X}^\natural, H_{Z}^\natural$ 
       and let $N$ be the length of the $LP(A,B)$ code, i.e., $N= (n_{A}^2 + n_{B}^2)|G|$. To show that the $LP(A,B)$ code is MDPC we use Definition 2.2 from \cite{bariffi2022moderate} and show that $\omega_{X}, \omega_{Z} \in O(\sqrt{N})$, as $N \to \infty.$ \footnote{Note that we fix the order of the group and let either the row length $n_{A}$ of $A$ or the row length $n_{B}$ of $B$ go to infinity.} We easily see that the row weights of the parity check matrices are upper bounded by $(n_{A}+n_{B})|G|.$ Since
       $$\limsup_{n_{\bullet} \to \infty}\frac{|G|^2(n_{A}^2 +2n_{A}n_{B}+n_{B}^2)}{|G|(n_{A}^{2}+n_{B}^{2})}< \infty \quad \text{for} \quad \bullet \in \{A,B\},$$
       we have $\omega_{X}^{2}, \omega_{Z}^{2} \in O(N)$ and the statement follows.
    \end{proof}

\section{Dihedral lifted product codes}\label{sec:dihedral}
We now present our main results: the parameters of dihedral lifted product codes. 
Throughout this section 
we assume $\textnormal{char}(\fq)\nmid |G|$ and $n \ge 2$.

The \emph{cyclic group} of order $n$, containing $1, \alpha, \alpha^{2}, \dots, \alpha^{n-1}$, denoted by $C_{n}$, is defined as $C_{n}:= \langle \alpha \rangle$, where $\alpha^n =1$ and $\alpha^m \neq 1$ for $0<m <n$.
The \emph{dihedral group} of order $2n$, containing $1, \alpha, \alpha^2, \dots, \alpha^{n-1}, \beta, \alpha \beta, \alpha^{2}\beta, \dots, \alpha^{n-1}\beta$, denoted by $D_{2n}$, is defined as $D_{2n}:= \langle \alpha, \beta \rangle$, where $\alpha^{n} = \beta^2 = 1$ and $\beta \alpha = \alpha^{n-1}\beta, \alpha^{m} \neq 1$, for $0<m <n$.

Let $f(x)=x^n+a_{n-1}x^{n-1}+\cdots+a_0\in \F_q[x],$ with \(a_0\neq 0\) and assume without loss of generality that \(f(x)\) is monic. We define the 
\emph{normalized reciprocal polynomial} of \(f(x)\) as
\[
f^*(x):=\frac{1}{a_0}\,x^n\,f\left(\frac{1}{x}\right).
\]
and say that \(f\) is \emph{self-reciprocal} if 
$f^*=f$. 
Now, we factorize \begin{equation*} \label{eq:fac}
    x^n -1 = \prod_{i=1}^r f_{i} \prod_{i=r+1}^{r+s}f_{i}^{*}f_{i},
\end{equation*} where $r$ is the number of self-reciprocal factors and $2s$ the number of non-self-reciprocal
factors. Let
$$\theta (n) := \begin{cases}
  1& \text{if $n$ is odd} \\
2 & \text{if $n$ is even} \\
\end{cases}
$$
and let $\fq \subseteq F_{i}$ be extension fields of $\fq$ such that $[F_{i}: \fq] =\deg f_{i}/2$ if $\theta(n)+1 \le i \le r$ and
$[F_{i}: \fq] = \deg f_{i}$ in all other cases. In \cite{martinez2015structure,vedenev2020relationship} explicit decompositions of $\fqD$ were obtained. For our construction we use a more generic framework and decompose $\fqD$ as
\begin{equation} \label{eq:dec_generic}
    \fqD \cong \bigoplus_{i=1}^{r+s}R_{i},
\end{equation}
where 
$$R_{i}= \begin{cases}\fq[C_{2}]
 & \text{if $1 \le i \le \theta(n)$} \\
 \Mat_{2}(F_{i})
 & \text{if $\theta(n)+1 \le i \le r+s$} \\ 
\end{cases}. $$
It is known (see \cite{vedenev2019codes}) that a code $\mC \subseteq \fqD$ in the direct sum \eqref{eq:dec_generic} of algebras is the direct sum of left ideals in the terms. In particular the code admits a canonical decomposition
\begin{equation*} \label{eq:code_dec}
        \mC \cong \bigoplus_{i=1}^{r+s}C_{i},
    \end{equation*}
    with each component $C_{i}$ being a submodule of the simple algebra $R_{i}.$ More precisely, for each index $i$ one of the following cases occurs:
    \begin{itemize}
\item \(C_i = R_i\),
    \item \(C_i = I_i\), where $I_i \subsetneq R_i$ is a proper ideal of $R_{i}$, or
    \item \(C_i = 0\).
    \end{itemize}
Similar to \cite{vedenev2020relationship} we introduce the disjoint index sets
\[
J_1 = \{\, i \in [r+s] : C_i = A_i \,\} \quad \text{and} \quad J_2 = \{\, i \in [r+s] : C_i = I_i \,\},
\]
which we refer to as the \emph{corresponding sets} of the code. (See \cite[Theorem~2]{vedenev2020relationship} for a detailed account of this decomposition.)

\subsection{Dimension formula}
We start with some preliminary results that facilitate the exploration of our main result of this section in Theorem \ref{thm:dimension}. To derive a dimension formula we analyze the matrices $A,B$ defined over $\fq$, stemming from the decomposition outlined in Equation \eqref{eq:dec_generic}. In the following, we only consider the case of $A,B \in \Mat_{m}(\fq)$ to obtain good distance properties as illustrated in the subsequent sections. 

Note that $\fq$ can also be seen as the group algebra over the trivial group. 
\begin{lemma} \label{lem:rk_H}
    Let $A, B \in \Mat_{m}(\fq), k_{A}:= \dim \ker A, k_{B}:= \dim \ker B$ and let $H_{X}, H_{Z}$ be defined as in \eqref{eq:parity_fq}. 
    Then we have
    \begin{align*}
        \rk H_{X} =m^{2}-k_{A}k_{B}, \quad \rk H_{Z}=m^{2}-k_{A} k_{B}.
    \end{align*}
\end{lemma}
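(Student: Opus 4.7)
The plan is to reduce the statement to an elementary Kronecker-product computation and then count dimensions via the left kernel of $H_X$ (and, symmetrically, $H_Z$). Since the entries of $A$ and $B$ lie in $\fq$, viewed as the group algebra of the trivial group, both $L(a_{ij})$ and $R(b_{ij})$ are simply the scalar $1\times 1$ matrices $(a_{ij})$ and $(b_{ij})$, so ${}^{\natural}A = A$ and $B^{\natural} = B$. The parity-check matrices in \eqref{eq:parity_fq} therefore become
$$H_X \,=\, [\,A \otimes I_m,\; -I_m \otimes B\,], \qquad H_Z \,=\, [\,I_m \otimes B^\top,\; A^\top \otimes I_m\,],$$
both of size $m^2 \times 2m^2$ over $\fq$.

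Next, I would compute $\rk H_X$ by determining the dimension of its left kernel. I write a generic left-kernel vector as $v = \operatorname{vec}(V)$ for some $V \in \Mat_m(\fq)$ and apply the standard identity $(X \otimes Y)\,\operatorname{vec}(Z) = \operatorname{vec}(Y Z X^\top)$ to the transposed equation $H_X^\top v = 0$. This translates the two block conditions into the matrix equations $VA = 0$ and $B^\top V = 0$. The first says that every row of $V$ lies in the left null space of $A$, a subspace of dimension $m - \rk A^\top = k_A$; the second says that every column of $V$ lies in $\ker B^\top$, a subspace of dimension $k_B$.

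Finally, the set of $m \times m$ matrices whose rows lie in a fixed $k_A$-dimensional row subspace and whose columns lie in a fixed $k_B$-dimensional column subspace admits a factorization $V = W' C U'$ with $C \in \fq^{k_B \times k_A}$ free, hence forms a $k_A k_B$-dimensional vector space. Rank-nullity then yields $\rk H_X = m^2 - k_A k_B$. An entirely analogous argument for $H_Z$ swaps the roles of $A$ and $B$ with their transposes but leaves the count untouched, since $\dim \ker A^\top = \dim \ker A$ for square matrices, giving $\rk H_Z = m^2 - k_A k_B$ as well. The only real bookkeeping hurdle is keeping the vectorization convention consistent so that the two block conditions correctly translate into one-sided multiplications by $A$ and $B^\top$; once that is pinned down, the dimension count is essentially elementary linear algebra.
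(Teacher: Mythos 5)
Your proof is correct, and it takes the dual route to the paper's. The paper computes $\rk H_X$ from the image side: the column space of $[\,A\otimes I_m,\,-I_m\otimes B\,]$ is $\im(A\otimes I_m)+\im(I_m\otimes B)$, the key step being the identification $\im(A\otimes I_m)\cap\im(I_m\otimes B)=\im(A)\otimes\im(B)$, after which inclusion--exclusion gives $mr_A+mr_B-r_Ar_B=m^2-k_Ak_B$. You work from the cokernel side: the left kernel of $H_X$ is the intersection of the left kernels of the two blocks, which the vec identity converts into $VA=0$ and $B^\top V=0$, and you count the matrices $V$ whose row space lies in a fixed $k_A$-dimensional subspace and whose column space lies in a fixed $k_B$-dimensional subspace. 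The two arguments are formally dual (your solution space is exactly the annihilator of the paper's sum of images) and cost about the same; yours has the minor advantage that the vec identity is completely standard, whereas the paper's intersection formula for the images of the two Kronecker factors is the one step a reader might want justified. Two small points to tighten: when you assert that $V=W'CU'$ with $C$ free captures \emph{all} admissible $V$, add the sentence that $W'$ has full column rank (so $V=W'Y$ for a unique $Y$, recovered by a left inverse of $W'$) and that the rows of $Y$ then inherit the row-space constraint, forcing $Y=CU'$; as written you have only exhibited a $k_Ak_B$-dimensional subspace of solutions rather than proved equality. Your opening observation that ${}^{\natural}A=A$ and $B^{\natural}=B$ over the trivial group makes explicit something the paper leaves implicit, which is a useful clarification.
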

    \begin{proof}
    Let $A,B \in \Mat_{m}(\fq)$ and denote $r_{A}:=\rk(A), r_{B}:=\rk(B), k_{A}=m-r_{A}, k_{B}=m-r_{B}.$ It is well-known that for any two matrices, the Kronecker product satisfies $\rk(A \otimes I_{m})= m \cdot r_{A},$ respectively $\rk(B \otimes I_{m})= m \cdot r_{B}.$ Moreover, we have the identifications $\im(A \otimes I_{m})= \im (A) \otimes \fq^m$ and $\im (I_{m} \otimes B)= \fq^m \otimes \im (B).$ Since the column space of the block matrix $H_{X}=[A\otimes I_{m},-I_{m} \otimes B]$ is the sum of the column spaces of its blocks, we obtain $\rk(H_{X})=\dim(\im(A \otimes I_{m})+\im (I_{m} \otimes B)).$ Now, note that $$\im(A \otimes I_{m}) \cap \im(I_{m} \otimes B)=\im(A)\otimes\im(B).$$
    Hence $$\dim(\im(A) \otimes\im(B))=r_{A}r_{B}.$$ Thus, by the standard formula for the dimension of a sum of subspaces, we have $$\rk(H_{X})=mr_{A}+mr_{B}-r_{A}r_{B}.$$ Substituting $r_{A}=m-k_{A},r_{B}=m-k_{B},$ we compute $$mr_{A}+mr_{B}-r_{A}r_{B}=m(m-k_{A})+m(m-k_{B})-(m-k_{A}(m-k_{B}).$$ Expanding
    $$m(m-k_{A})+m(m-k_{B})=2m^{2}-m(k_{A}+k_{B})$$ and $$(m-k_{A})(m-k_{B})=m^{2}-m(k_{A}+k_{B})+k_{A}k_{B}.$$Hence $$\rk(H_{X})=(2m^{2}-m(k_{A}+k_{B}))-(m^{2}-m(k_{A}+k_{B})+k_{A}k_{B})=m^{2}-k_{A}k_{B}.$$ An entirely analogous argument shows that $\rk(H_{Z})=m^{2}-k_{A}k_{B}.$ This concludes the proof.
   
           \end{proof}

\begin{proposition} \label{prop:prop_dec}
    Let $A, B \in \Mat_{m}(\fq)$ and $k_{A}:= \dim \ker A, k_{B}:= \dim \ker B$. Then
    $$\dim LP(A,B) = 2k_{A}k_{B}.$$
\end{proposition}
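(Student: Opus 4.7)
The plan is to combine the general CSS dimension formula from equation (2.2) with the rank computation already done in Lemma \ref{lem:rk_H}. Since $A, B \in \Mat_{m}(\fq)$ and $\fq$ is viewed as the group algebra of the trivial group, the left and right regular representations reduce to scalar entries, so that ${}^{\natural}A = A$ and $B^{\natural} = B$. This makes the parity-check matrices from Definition \ref{def:LP} simply
\[
H_X = [\, A \otimes I_m,\; -I_m \otimes B \,], \qquad
H_Z = [\, I_m \otimes B^{\top},\; A^{\top} \otimes I_m \,],
\]
each an $m^2 \times 2m^2$ matrix. In particular the block length of $LP(A,B)$ is $N = 2m^2$.

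First I would record that $N = 2m^2$ by counting columns in $H_X$ (or equivalently in $H_Z$). Next, I would invoke Lemma \ref{lem:rk_H} to substitute $\rk H_X = m^2 - k_A k_B$ and $\rk H_Z = m^2 - k_A k_B$. Finally, I would plug these values into the CSS dimension formula (2.2),
\[
\dim LP(A,B) \;=\; N - \rk H_X - \rk H_Z \;=\; 2m^2 - (m^2 - k_A k_B) - (m^2 - k_A k_B) \;=\; 2 k_A k_B,
\]
which is exactly the claimed formula.

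There is essentially no obstacle: the only nontrivial input is the rank statement from Lemma \ref{lem:rk_H}, which has already been proved. The only minor point worth emphasising in the write-up is that because the group is trivial here, the distinction between $A$ and ${}^{\natural}A$ (and between $B$ and $B^{\natural}$) disappears, so that the $H_X, H_Z$ above are genuinely the matrices to which Lemma \ref{lem:rk_H} applies. Once that identification is made, the proof is a one-line substitution.
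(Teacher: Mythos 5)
Your proposal is correct and matches the paper's proof exactly: both apply Lemma \ref{lem:rk_H} to get $\rk H_X = \rk H_Z = m^2 - k_A k_B$ and substitute into the CSS dimension formula $K = N - \rk H_X - \rk H_Z$ with $N = 2m^2$. Your extra remark that the trivial group makes ${}^{\natural}A = A$ and $B^{\natural} = B$ is a sensible clarification but not a departure from the paper's argument.
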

    \begin{proof}
        From Lemma \ref{lem:rk_H} we have $\rk H_X = m^2 -k_{A}k_{B}$ and $\rk H_Z = m^2 -k_{A}k_{B}$. Hence using formula \eqref{eq:dimension} for the quantum dimension we obtain
        $$K= 2m^{2}-\rk H_X-\rk H_Z= 2k_{A}k_{B}. $$
    \end{proof}
In the following we present a dimension formula for dihedral lifted product codes. While dimension formulas for certain group algebras exist in the literature - especially in the abelian (cyclic) case, see \cite{panteleev2021quantum} - the explicit expression we derive here for nonabelian dihedral group algebras does not appear to have been documented before.
\begin{theorem} \label{thm:dimension}
Let $x^{n}-1= \left(\prod_{i=1}^{r}f_{i}(x) \right)\left(\prod_{i=r+1}^{r+s}f_{i}(x)f_{i}^{*}(x) \right)$ be the factorization of $x^{n}-1$ into irreducible factors. Let $e_{A}, e_B \in \fqD$ be two generating idempotents, such that $$e_{A}\fq[D_{2n}]=\bigoplus_{i=1}^{r+s}C_{i}^{A}, \quad e_{B}\fq[D_{2n}]=\bigoplus_{i=1}^{r+s}C_{i}^{B}$$ are the code decompositions, where
$$C_{i}^{\bullet}=\begin{cases}
    R_{i}& \text{if $i \in J_{1}^{\bullet}$} \\
I_{i} & \text{if $i \in J_{2}^{\bullet}$} \\
0 & \text{if $i \notin J_{1}^{\bullet} \cup J_{2}^{\bullet}$} \\
\end{cases},$$
for $\bullet \in \{A,B\}.$
Let $\mathbf{a}= (a_1 e_{A}, \dots, a_m e_{A}), \mathbf{b}= (b_1 e_{B}, \dots, b_m e_{B})$ and $a_{1}, \dots, a_{m}, b_{1}, \dots, b_{m} \in \fqD$ such that $a_{i}\fq[D_{2n}]+e_A\fq[D_{2n}]=\fq[D_{2n}],$ respectively $b_{i}\fq[D_{2n}]+e_B\fq[D_{2n}]=\fq[D_{2n}].$ \footnote{A straightforward way is to check a random $a_{i}$ until $a_{i}\fq[D_{2n}]+e_A\fq[D_{2n}]=\fq[D_{2n}]$ or its equivalent $x \notin e_{A}\fq[D_{2n}]$ is satisfied and repeat this process to get $a_{i}'s$ belonging to the correct coset for each row of $A$. An analogous method applies to $B \in \Mat_{m}(\mathbb{F}_{11}[D_{2n}])$ with its corresponding idempotent $e_{B}$. That suitable vectors $a,b$ are produced in a finite number of steps is guaranteed by standard ring-theoretic arguments (see \cite{rowen1988ring,lam1991first}).} Let $A, B \in \Mat_{m}(\fq[D_{2n}])$ such that all rows lie in $\mathbf{a}\fq[D_{2n}]$, respectively $\mathbf{b}\fq[D_{2n}]$. Then
  \begin{align*}
    \dim LP(A,B)=&\sum_{j=1}^{r}\deg f_{j}  (\mathbf{1}_{\mathcal{J}_{1}}(j)\omega_1 + \mathbf{1}_{\mathcal{J}_{2}}(j)\omega_2
    +\mathbf{1}_{\mathcal{J}_{3}}(j)\omega_3+\mathbf{1}_{\mathcal{J}_{4}}(j)\omega_4
    +\mathbf{1}_{\mathcal{J}_{5}}(j)\omega_5 +\mathbf{1}_{\mathcal{J}_{6}}(j)\omega_6) \\
    &+ \sum_{j=r+1}^{r+s}2\deg f_{j}( \mathbf{1}_{\mathcal{J}_{1}}(j)\omega_1 + \mathbf{1}_{\mathcal{J}_{2}}(j)\omega_2
    +\mathbf{1}_{\mathcal{J}_{3}}(j)\omega_3+\mathbf{1}_{\mathcal{J}_{4}}(j)\omega_4
    +\mathbf{1}_{\mathcal{J}_{5}}(j)\omega_5 +\mathbf{1}_{\mathcal{J}_{6}}(j)\omega_6),
\end{align*}
where
\begin{align*}
    \mathcal{J}_{1}&:= J_{2}^{A} \cap J_{2}^{B}\\
    \mathcal{J}_{2}&:= (J_{2}^{A} \cap J_{1}^{B}) \cup (J_{2}^{B} \cap J_{1}^{A})\\
    \mathcal{J}_{3} &:= J_{1}^{A} \cap J_{1}^{B}\\
    \mathcal{J}_{4} &:= [r+s]\setminus (J_{1}^{A} \cup J_{2}^{A}) \cap [r+s] \setminus (J_{1}^{B} \cup J_{2}^{B})\\
      \mathcal{J}_{5} &:= \left( J_{1}^{A} \cap [r+s]\setminus (J_{1}^{B} \cup J_{2}^{B})\right) \cup\left( J_{1}^{B} \cap [r+s] \setminus (J_{1}^{A} \cup J_{2}^{A}) \right)\\
        \mathcal{J}_{6} &:= ([r+s]\setminus (J_{1}^{A} \cup J_{2}^{A}) \cap (J_{2}^{B}) \cup ([r+s]\setminus (J_{1}^{B} \cup J_{2}^{B}) \cap (J_{2}^{A})\\
            \omega_1 &:=(2m-1)^{2}\\
      \omega_2 &:=2(m- 1)(2m-1)\\
      \omega_3 &:=2^{2}(m-1)^{2}\\
    \omega_4 &:=2^{2}m^{2}\\
    \omega_5 &:=2^{2}m(m-1)\\
    \omega_6 &:=2m(2m-1).
\end{align*}
    \begin{proof}
    Let $$e_{A}\fqD = \bigoplus_{i=1}^{r+s}C_{i}^A, \quad e_{B}\fqD = \bigoplus_{i=1}^{r+s}C_{i}^B$$
    be the decompositions of the codes $e_A\fqD, e_B\fqD$. Using the decomposition of $\fq[D_{2n}]$ any matrix $A \in \Mat_{m}(\fqD)$ can be uniquely represented by the collection of matrices $(A_{i})_{i \in [r+s]}$, where $A_{i}$ is the corresponding matrix over $R_{i}$, see \cite{vedenev2019codes} for more details. This gives 
        \begin{align} \label{eq:dec_formula}
            \dim LP(A,B) = \sum_{i=1}^{\theta (n)}\dim_{F_{i}}LP(A_i, B_i) &+ \sum_{i=\theta (n)+1}^{r}\dim_{F_{i}}LP(A_i, B_i) \cdot \frac{\deg f_{i}}{2}\nonumber\\
            &+ \sum_{i=r+1}^{r+s}\dim_{F_{i}}LP(A_i, B_i) \cdot \deg f_{i}.
        \end{align}
        We have $\dim_{F_{i}}\mC(A_i)=2m - \rk_{F_{i}}A_i$ and $\dim_{F_{i}}\mC(B_i)=2m - \rk_{F_{i}}B_i$. Let $\bullet \in \{A,B\}$ then 
$$\rk_{F_{i}}\bullet_{i} =  \begin{cases}
  0& \text{if $i \in [r+s]\setminus J_{1}^{\bullet} \cup J_{2}^{\bullet}$} \\
1 & \text{if $i \in J_{2}^{\bullet}$} \\
2 & \text{if $i \in J_{1}^{\bullet}$} \\
\end{cases}$$

and hence by Proposition \ref{prop:prop_dec}
\begin{equation}\label{eq:cases_dec}
    \dim_{F_{i}}LP(A_{i}, B_{i})= \begin{cases}
  2(2m-1)^{2}& \text{if $i \in \mathcal{J}_{1}$} \\
  2^2 (m-1)(2m-1)& \text{if $i \in \mathcal{J}_{2}$} \\
    2^{3}(m-1)^2& \text{if $i \in \mathcal{J}_{3}$} \\
      2^{3}m^{2}& \text{if $i \in \mathcal{J}_4$} \\
        2^{3}m(m-1)& \text{if $i \in \mathcal{J}_{5}$} \\
          2^{2}m(2m-1)& \text{if $i \in \mathcal{J}_{6}$}.\\
\end{cases}
\end{equation}
Now the result follows by combining  \eqref{eq:dec_formula} and \eqref{eq:cases_dec}.
        \end{proof}
\end{theorem}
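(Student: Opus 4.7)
The plan is to reduce the global dimension computation to a sum over the simple components $R_i$ of the decomposition \eqref{eq:dec_generic}. First I would observe that $\Mat_m(\fqD)$ splits as $\bigoplus_{i=1}^{r+s}\Mat_m(R_i)$, and that this splitting is respected by the parity-check matrices $H_X^\natural$ and $H_Z^\natural$: since both regular representations $L$ and $R$ of $\fqD$ block-diagonalize along the decomposition, so do ${}^\natural A$ and $B^\natural$, and hence the Kronecker products that define $H_X^\natural, H_Z^\natural$. Consequently $LP(A,B)$ decomposes componentwise and its $\fq$-dimension is a sum of local contributions coming from the pairs $(A_i, B_i)$.

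For each $i$ I would invoke Proposition~\ref{prop:prop_dec} over the field $F_i$, after identifying $A_i$ and $B_i$ with $F_i$-linear endomorphisms of $F_i^{2m}$ (using the isomorphism $\Mat_m(\Mat_2(F_i)) \cong \Mat_{2m}(F_i)$ in the matrix-algebra blocks, and an analogous reduction in the $\fq[C_2]$-blocks). This yields $\dim_{F_i} LP(A_i, B_i) = 2\,k_{A_i}\,k_{B_i}$ with $k_{A_i} = 2m - \rk_{F_i} A_i$, and analogously for $B_i$.

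The next step is to determine $\rk_{F_i} A_i$ from the structural hypothesis. The coset condition $a_j \fqD + e_A \fqD = \fqD$ guarantees that the image of each $a_j e_A$ in $R_i$ generates $C_i^A$ as a left module, so the rows of $A_i$ span $(C_i^A)^m$ componentwise; hence $\rk_{F_i} A_i \in \{0, 1, 2\}$ according to whether $C_i^A = 0$, $C_i^A$ is a proper (minimal) left ideal $I_i$, or $C_i^A = R_i$. Enumerating the nine possible combinations for $(C_i^A, C_i^B)$ and collapsing the symmetric pairs produces exactly the six index sets $\mathcal{J}_1, \ldots, \mathcal{J}_6$ and the corresponding values $\omega_1, \ldots, \omega_6$ via the formula $2\,k_{A_i}\,k_{B_i}$.

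Finally I would convert each $F_i$-dimension into an $\fq$-dimension by multiplying by $[F_i:\fq]$: this gives weight $\deg f_i/2$ for $\theta(n)+1 \le i \le r$ and weight $\deg f_i$ for $r+1 \le i \le r+s$, with an additional factor of $2$ in the latter range coming from the pairing $f_i f_i^*$; together with the analogous bookkeeping for the $\fq[C_2]$-blocks in the range $1 \le i \le \theta(n)$, this reproduces the coefficients $\deg f_j$ and $2\deg f_j$ in the stated formula. The main obstacle will be the rank determination: one must translate the abstract coset condition on the $a_j$'s into a concrete componentwise $F_i$-rank statement, verifying in the $\Mat_2(F_i)$-blocks that the generic choice indeed attains the maximum rank compatible with the left-ideal structure of $C_i^A$ (where proper minimal left ideals have $F_i$-dimension $2$), since any degeneracy would collapse several of the $\omega_k$-cases and alter the dimension.
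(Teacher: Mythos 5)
Your proposal follows essentially the same route as the paper's proof: decompose $\fq[D_{2n}]$ into its simple components $R_i$, reduce $LP(A,B)$ componentwise to $LP(A_i,B_i)$, apply Proposition~\ref{prop:prop_dec} over $F_i$ with $k_{\bullet_i}=2m-\rk_{F_i}\bullet_i$, read off the ranks $0,1,2$ from whether $C_i^{\bullet}$ is $0$, $I_i$, or $R_i$, and then weight each contribution by $[F_i:\fq]$ (with the factor $2$ for the non-self-reciprocal pairs). Your added remarks — that the block-diagonalization of the regular representations is what makes the splitting compatible with $H_X^\natural, H_Z^\natural$, and that the coset condition on the $a_j$ must be translated into the componentwise rank statement — only make explicit steps the paper leaves implicit.
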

\subsection{Induced codes}
Let $\ell$ be an arbitrary divisor of $n$ and $t <\ell$. Consider the two proper subgroups $D_{2(n/\ell)}= \langle a^{\ell}, ba^t \rangle$ and $C_{\ell}= \langle a^{n/\ell} \rangle$ of the dihedral group $D_{2n}$. Let $I$ be a left ideal of $\fq[C_{\ell}]$, then the 
code
$\mC := (\fqD)I$ is called $C_{\ell}$-induced. In \cite{zimmermann1994beitrage}
it was shown that if $I$ is an $[\ell, k,d]$ code, then $\mC$ is an $[2n, |\Gamma|k,d]$ code, where $\Gamma$ is the right transversal for $C_{\ell}$ in $D_{2n}.$
To distinguish between the different decompositions and the corresponding auxiliary code constructions, when referring to the algebra $\fq[C_{\ell}]$, we add $\;\hat{}\;$ to the notation. 

\begin{theorem} (cf.\ \cite[Theorem 6]{vedenev2020relationship}) \label{thm:induced_cyc}
    Let $\hat{x}^{\ell}-1 = (\prod_{i=1}^{\hat{r}}\hat{f}_{i}(\hat{x}))(\prod_{i=\hat{r}+1}^{\hat{r}+\hat{s}}\hat{f}_{i}(\hat{x})\hat{f}_{i}^{*}(\hat{x}))$ be the factorization of $\hat{x}^{\ell} -1$ into irreducible factors, $\hat{g} \mid \hat{x}^{\ell}-1$ and $\hat{\mC}_{\hat{g}} := (\hat{g})$.
    Let $\Omega : \fq[C_{\ell}] \hookrightarrow \fq[D_{2n}]$ be the embedding into $\fqD$ and let $\mC = (\fq[D_{2n}])\Omega(\hat{\mC}_{\hat{g}})$ be the induced code. Then $\mC$ is an $[2n, \frac{2kn}{\ell},d]$ code, where $k$ is the dimension of $\hat{\mC}_{\hat{g}}$ and $d$ its minimum distance. Moreover,
    \begin{equation*} \label{eq:cyclic_dec}
        \mC \cong \bigoplus_{j=1}^{r+s}B_{j}, \quad B_{j}= \begin{cases}R_j
 & \text{if $j \in J_1$} \\
 I_{j}
 & \text{if $j \in J_2$} \\ 
 0
 & \text{if $j \notin J_1 \cup J_{2}$} \\ 
\end{cases}, 
    \end{equation*}
    where 
    \begin{align*}
        &J_{1}= \{ j \in [r+s] : (f_{j}(x) \nmid \hat{g}(x^{n/\ell}) \wedge f_{j}^{*}(x) \nmid \hat{g}(x^{n/\ell}) \},\\
        &J_{2}= \{ j \in [r+s] \setminus [\theta (n)] :  \neg(f_{j}(x) \mid \hat{g}(x^{n/\ell}) \wedge f_{j}^{*}(x) \mid \hat{g}(x^{n/\ell}))\}.    
    \end{align*}
\end{theorem}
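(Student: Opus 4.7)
The plan is to split the statement into (i) the basic code parameters $[2n, 2kn/\ell, d]$ and (ii) the explicit decomposition $\mC \cong \bigoplus B_j$. For (i), I would invoke Zimmermann's induced-code theorem \cite{zimmermann1994beitrage} mentioned just before the statement: since $[D_{2n} : C_\ell] = 2n/\ell$, a right transversal $\Gamma$ has cardinality $2n/\ell$, and Zimmermann's result propagates the parameters $[\ell, k, d]$ of $\hat{\mC}_{\hat{g}}$ to $[2n, (2n/\ell)k, d]$ for $\mC$. This handles length, dimension, and distance simultaneously.

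For (ii) I would exploit that $\mC = \fq[D_{2n}]\Omega(\hat{g})$ is a left ideal of $\fq[D_{2n}]$ and combine this with the Wedderburn-Artin decomposition \eqref{eq:dec_generic}. Since each simple summand $R_j$ is either a field (for $j \le \theta(n)$) or a matrix algebra $\Mat_2(F_j)$, the only possible left ideals of $R_j$ are $0$, a proper nonzero $I_j$ (which exists only in the matrix case), or $R_j$ itself. Thus $\mC \cong \bigoplus_j B_j$ with $B_j$ in one of those three classes; the three-case structure in the theorem is then automatic, and only the combinatorial identification of $J_1$ and $J_2$ remains.

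To identify the correct case I would project the generator $\Omega(\hat{g}) = \hat{g}(\alpha^{n/\ell})$ into each $R_j$. Because $\hat{g}(\alpha^{n/\ell})$ lives in the cyclic subalgebra $\fq[C_n]$, and the central idempotent of the $j$-th block corresponds to $f_j(x)$ (for $j \le r$) or to $f_j(x) f_j^*(x)$ (for $j > r$), the projection is controlled by the residues $\hat{g}(x^{n/\ell}) \bmod f_j(x)$ and $\hat{g}(x^{n/\ell}) \bmod f_j^*(x)$. In the matrix blocks, the explicit Wedderburn isomorphism from \cite{martinez2015structure, vedenev2020relationship} places these residues on the diagonal of the image, so the $F_j$-rank of the projected generator (and hence the left ideal it generates) equals the number of nonvanishing residues. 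This gives $B_j = 0$ when both residues vanish, $B_j = I_j$ when exactly one vanishes (so $j \in J_2$), and $B_j = R_j$ when neither vanishes (so $j \in J_1$). The restriction $j \notin [\theta(n)]$ in the definition of $J_2$ reflects that the trivial components correspond to the self-reciprocal factors $x-1$ (and $x+1$ when $n$ is even), where the two conditions $f_j \mid \hat{g}(x^{n/\ell})$ and $f_j^* \mid \hat{g}(x^{n/\ell})$ coincide so only the two extreme cases can arise.

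The main obstacle is making the rank-to-divisibility correspondence precise in the matrix blocks. This step is not accessible from the abstract decomposition \eqref{eq:dec_generic} alone; one needs the concrete Wedderburn isomorphism from \cite{vedenev2020relationship}, together with the explicit action of $\alpha^{n/\ell}$ on each $\Mat_2(F_j)$, to verify that the two diagonal coordinates of the image of $\hat{g}(\alpha^{n/\ell})$ really are governed by $f_j$ and $f_j^*$ respectively. Once that dictionary is in place, the rest of the argument is bookkeeping against the definitions of $J_1$ and $J_2$.
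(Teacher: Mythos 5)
The paper gives no proof of this theorem at all: it is imported, up to notation, from Vedenev--Deundyak \cite[Theorem~6]{vedenev2020relationship}, so there is no internal argument to compare against. Your sketch is a correct reconstruction of that source's argument --- Zimmermann's induction theorem for the parameters $[2n,(2n/\ell)k,d]$, and the explicit Wedderburn isomorphism sending $\alpha$ to $\diag(\xi,\xi^{-1})$ in each matrix block so that the rank of the projected generator counts the residues of $\hat{g}(x^{n/\ell})$ modulo $f_j$ and $f_j^*$ --- and your case identification matches the intended (disjoint) reading of $J_1$ and $J_2$; the fact that, as literally written, $J_1\subseteq J_2$ for $j>\theta(n)$ is an imprecision in the theorem statement, not in your argument.
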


\begin{theorem} (cf.\ \cite[Theorem 7]{vedenev2020relationship}) \label{thm:dihedralinduced}
 Let $\hat{x}^{n/\ell}-1 = (\prod_{i=1}^{\hat{r}}\hat{f}_{i}(\hat{x}))(\prod_{i=\hat{r}+1}^{\hat{r}+\hat{s}}\hat{f}_{i}(\hat{x})\hat{f}_{i}^{*}(\hat{x}))$ be the factorization of $\hat{x}^{n/\ell} -1$ into irreducible factors, 
 let $\Omega : \fq[D_{2(n/\ell)}] \hookrightarrow \fq[D_{2n}]$ be the embedding into $\fqD$ and let $\hat{\mC} \subseteq \fq[D_{2(n/\ell)}]$ be a code such that
    $$\hat{\mC} \cong \bigoplus_{i=1}^{\hat{r}+\hat{s}}\hat{B}_{i},$$
    where, for $1 \le i \le \hat{r},\hat{B}_{i}=0$ or $\hat{B}_{i}= \hat{A}_{i}$. Let $\mC = (\fq[D_{2n}])\Omega(\hat{\mC})$ be the induced code, i.e., an $[2n, \ell k,d]$ code, where $k$ is the dimension of $\hat{\mC}_{\hat{g}}$ and $d$ its minimum distance. Moreover, suppose that 
    $$\mC \cong \bigoplus_{j=1}^{r+s}B_j.$$
 Then\footnote{Note that in \cite[Theorem~7]{vedenev2020relationship}, only divisibility conditions of the form $f_j(x)\mid \hat{f}_i(x^{\ell})$ are explicitly stated. The reciprocal polynomial cases $f_j^*(x)\mid \hat{f}_i(x^{\ell})$ are implicitly handled through a simplifying divisibility assumption (cf.\cite[Remark 6]{vedenev2020relationship}).}, for all $1 \le i \le \hat{r}+ \hat{s},$
   $$B_{j}= \begin{cases}R_j
 & \text{if $f_{j}(x) \mid \hat{f}_{i}(x^{\ell}) \vee f_{j}^{*}(x) \mid \hat{f}_{i}(x^{\ell})$\, \text{and}\,$\hat{B}_{i} = \hat{A}_{i},$} \\
   I_{j}& \text{if $f_{j}(x) \mid \hat{f}_{i}(x^{\ell}) \vee f_{j}^{*}(x) \mid \hat{f}_{i}(x^{\ell}) \, \text{and} \, \hat{B}_{i} = \hat{I}_{i},$} \\
  0
 & \text{else.} \\ 
\end{cases} $$  
\end{theorem}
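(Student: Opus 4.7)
The plan follows the strategy of \cite[Theorem~7]{vedenev2020relationship}, adapted to handle the reciprocal-polynomial cases explicitly. First, I would invoke the Wedderburn-type decompositions from \eqref{eq:dec_generic},
$$\fq[D_{2(n/\ell)}] \cong \bigoplus_{i=1}^{\hat r + \hat s} \hat R_i, \qquad \fq[D_{2n}] \cong \bigoplus_{j=1}^{r+s} R_j,$$
indexed by the irreducible factors of $\hat x^{n/\ell}-1$ and $x^n-1$ respectively. Since $\Omega$ sends $\hat\alpha \mapsto \alpha^\ell$ and $\hat\beta \mapsto \beta\alpha^t$, restricted to the cyclic subalgebra $\Omega$ realises the substitution $\hat x \mapsto x^\ell$, and on the central part it maps primitive central idempotents of $\fq[D_{2(n/\ell)}]$ to sums of primitive central idempotents of $\fq[D_{2n}]$.

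Second, I would identify the set of indices $j$ that receive a nonzero contribution from $\hat R_i$. The cyclic part of the primitive idempotent $\hat e_i$ is governed by $\hat f_i(\hat x)$, and substituting gives the polynomial $\hat f_i(x^\ell) \in \fq[x]$. Applying the Chinese Remainder Theorem to $x^n - 1$ and evaluating $\hat e_i(x^\ell)$ at a root $\xi$ of $f_j(x)$, one sees that $\hat e_i(x^\ell)$ projects to $1$ in the cyclic component $\fq[x]/(f_j(x))$ iff $f_j(x) \mid \hat f_i(x^\ell)$. Combining a reciprocal pair of cyclic components $\{f_j, f_j^*\}$ into the single dihedral summand $R_j$ then yields the index set
$$S_i = \{\, j \in [r+s] : f_j(x) \mid \hat f_i(x^\ell) \text{ or } f_j^*(x) \mid \hat f_i(x^\ell) \,\},$$
and consequently $\Omega(\hat R_i) \subseteq \bigoplus_{j \in S_i} R_j$.

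Third, I would determine the type of each $B_j$. Because $\mC = \fq[D_{2n}]\,\Omega(\hat{\mC})$ is the smallest left ideal containing every $\Omega(\hat B_i)$, its projection onto $R_j$ equals the left ideal of $R_j$ generated by $\Omega(\hat B_i)$ for any $i$ with $j \in S_i$. When $\hat B_i = 0$ this projection is $0$; when $\hat B_i = \hat A_i$ coincides with the full simple component $\hat R_i$, left multiplication by $\fq[D_{2n}]$ saturates the image and forces $B_j = R_j$; and when $\hat B_i = \hat I_i$ is a proper ideal of $\hat R_i$, the injectivity of $\Omega$ together with the centrality of the idempotent decomposition ensures that the saturated ideal inside $R_j$ is proper, giving $B_j = I_j$. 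Indices $j$ that belong to no $S_i$ automatically yield $B_j = 0$.

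The main obstacle is the reciprocal-symmetry bookkeeping: the substitution $\hat x \mapsto x^\ell$ happens on cyclic subalgebras, while the dihedral decomposition identifies the pairs $\{f_j, f_j^*\}$ into a single summand $R_j$ via the $\beta$-action. One must verify that $f_j^*(x) \mid \hat f_i(x^\ell)$ is equivalent to $f_j(x) \mid \hat f_i^*(x^\ell)$, and that conjugation by $\beta$ correctly intertwines the two reciprocal cyclic components so that the induced component in $R_j$ is captured by either divisibility condition. This symmetry is exactly what the disjunction in the case distinction encodes, and making it explicit is the refinement over \cite[Theorem~7]{vedenev2020relationship} flagged in the footnote (cf.\ \cite[Remark~6]{vedenev2020relationship}).
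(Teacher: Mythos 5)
The paper does not actually prove this statement: it is imported verbatim (with the footnoted caveat about reciprocal factors) from \cite[Theorem~7]{vedenev2020relationship}, so there is no internal proof to compare against. Your outline is consistent with the strategy of that reference and with the surrounding machinery of the paper (the decomposition \eqref{eq:dec_generic} and the substitution $\hat{x}\mapsto x^{\ell}$ induced by $\hat{\alpha}\mapsto\alpha^{\ell}$), and the first two steps --- matching primitive central idempotents via $f_j(x)\mid \hat{f}_i(x^{\ell})$ or $f_j^*(x)\mid \hat{f}_i(x^{\ell})$, using that $x^n-1=\prod_i \hat{f}_i(x^{\ell})\cdot\prod_i\hat{f}_i(x^{\ell})\hat{f}_i^*(x^{\ell})$ --- are correct and essentially complete.

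The one soft spot is your third step, in the case $\hat{B}_i=\hat{I}_i$. You assert that ``the injectivity of $\Omega$ together with the centrality of the idempotent decomposition ensures that the saturated ideal inside $R_j$ is proper,'' but injectivity and centrality are not by themselves the reason: an injective map can perfectly well have an image that generates the whole left ideal $R_j$. What is actually needed is that the component map $\hat{R}_i\cong\Mat_2(\hat{F}_i)\to R_j\cong\Mat_2(F_j)$ is a unital homomorphism of simple algebras induced (up to conjugation) by a field embedding $\hat{F}_i\hookrightarrow F_j$, so that a primitive (rank-one) idempotent generating $\hat{I}_i$ is sent to a rank-one idempotent of $R_j$; the left ideal it generates is then minimal, hence proper. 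This is provable by a short module-counting argument, but it is the actual content of the step and should be spelled out. You should also note separately the components $R_j\cong\fq[C_2]$ with $j\le\theta(n)$, which are commutative and not simple, so the ``rank'' language does not apply there; the theorem's hypothesis that $\hat{B}_i\in\{0,\hat{A}_i\}$ for $i\le\hat{r}$ keeps this case easy, but it deserves a sentence. With these points made explicit, your plan matches the intended proof.
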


\subsection{Distance bound}
We now determine the last missing parameter of our codes: the minimum distance. We will derive a lower bound on the minimum distance of the lifted product code, depending on the minimum distance of the codes related to the matrices used in the construction. 

\begin{notation} \label{notation:block_matrix}
For $c\in [\fq[G]]^m$ we consider the block vector$$\mathbf{c}:=[\mathbf{c_{1}}, \dots, \mathbf{c_m}] \in \fq^{|G|m},$$
where $\mathbf{c_{i}} \in \fq^{|G|}$ contains the coefficients of $c_i \in \fq[G]$.
\end{notation}

We need the following lemma in the proof of our main result in Theorem \ref{thm:distance}.

\begin{lemma}[See \cite{lin2023quantum}, Lemma 16] \label{lem:dist}
 
    Let $H$ be an abelian group and let $A \in \Mat_{m_{A}}(\fq[H]), B \in \Mat_{m_{B}}(\fq[H])$ with $\dim\mC(A) = \dim \mC (B)=0.$\footnote{Recall from Section 2.1 that $\mC (A)$ denotes the classical linear code with parity check matrix $A$.} Then the quasi-abelian code $LP(A,B)$ has zero dimension, i.e.,
    $$\dim LP(A,B)=0.$$
\end{lemma}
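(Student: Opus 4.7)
The plan is to reduce the claim to the ``trivial group'' case already handled by Proposition \ref{prop:prop_dec}, by exploiting the semisimple structure of $\fq[H]$ when $H$ is abelian. Since the running hypothesis of this section is $\cchar(\fq)\nmid |G|$, Maschke's theorem together with the fact that the irreducible $\fq$-representations of an abelian group are one-dimensional over their splitting field gives a decomposition
\[
\fq[H] \;\cong\; \bigoplus_{i=1}^{t} F_i,
\]
where each $F_i$ is a finite field extension of $\fq$. The first step is therefore to write down this decomposition explicitly and observe that it is functorial on matrices: any $M\in\Mat_r(\fq[H])$ decomposes as a tuple $(M_i)_i$ with $M_i\in \Mat_r(F_i)$, and kernels split as $\dim_{\fq}\ker_{\fq[H]}M=\sum_i [F_i:\fq]\cdot \dim_{F_i}\ker M_i$. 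In particular, $\dim \mC(A)=0$ forces $\ker A_i = 0$ in $F_i^{m_A}$ for every $i$, and similarly for $B$.

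The second step is to verify that the lifted product construction is compatible with this decomposition. Because $\fq[H]$ is commutative, one has $L(a)=R(a)$ for all $a\in \fq[H]$, and the regular representation is simultaneously block-diagonalized by a change of basis of $\fq^{|H|}$ adapted to $\fq[H]\cong\bigoplus F_i$ (each block being the matrix of multiplication by $a_i\in F_i$, expressed over $\fq$ via a fixed $\fq$-basis of $F_i$). Applying this change of basis entrywise to ${}^{\natural}A$ and $B^{\natural}$ and then to the Kronecker products and block-juxtapositions defining $H_X^{\natural}$ and $H_Z^{\natural}$ in \eqref{eq:parity_fq}, one sees that $H_X^{\natural}$ and $H_Z^{\natural}$ are, up to simultaneous permutation of rows and columns, block-diagonal with $i$-th block equal to the analogous parity check matrix $(H_X^{\natural})_i,(H_Z^{\natural})_i$ of the lifted product $LP(A_i,B_i)$ over $F_i$. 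Hence
\[
\dim LP(A,B) \;=\; \sum_{i=1}^{t}[F_i:\fq]\cdot \dim_{F_i} LP(A_i,B_i).
\]

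The third step is to apply Proposition \ref{prop:prop_dec} over each $F_i$ (which is legitimate since each $A_i,B_i\in \Mat_{m}(F_i)$ lives over the group algebra of the trivial group, viewed over $F_i$). This gives $\dim_{F_i}LP(A_i,B_i)=2\dim_{F_i}\ker A_i \cdot \dim_{F_i}\ker B_i$. Combined with the translation of the hypotheses from step one, namely $\dim_{F_i}\ker A_i=\dim_{F_i}\ker B_i=0$ for every $i$, every summand vanishes, so $\dim LP(A,B)=0$.

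The main obstacle is the bookkeeping in step two: one must be careful that the change of basis implementing $\fq[H]\cong \bigoplus F_i$ is applied compatibly on the left and right sides of each Kronecker factor, and that the block structure survives the juxtapositions defining $H_X^{\natural}$ and $H_Z^{\natural}$. Once this is in place, the rank/nullity computation from Proposition \ref{prop:prop_dec} transfers component-wise and the conclusion is immediate.
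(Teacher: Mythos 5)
The paper itself gives no proof of this lemma: it is imported verbatim from \cite{lin2023quantum} (Lemma 16), so there is nothing internal to compare against. Judged on its own, your argument is essentially correct and gives a clean self-contained proof in the setting the paper actually uses. The Wedderburn decomposition $\fq[H]\cong\bigoplus_i F_i$ into fields, the componentwise splitting of kernels and of the lifted product, and the final application of the rank computation are all sound. Three points deserve tightening. First, the decomposition into fields requires $\cchar(\fq)\nmid|H|$; this is not in the hypotheses of the lemma as stated, only in the section's running assumption (and it does hold for the application, where $H=G_{AB}\le D_{2n}$), so you should state explicitly that your proof covers only the semisimple case, whereas the cited result of Lin--Pryadko does not need this detour. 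Second, in the block-diagonalization step note that the entries of ${}^{\natural}A$ are $L(a_{ij})^\top=R(a_{ij}^*)$, so a single change of basis diagonalizing the regular representation handles both ${}^{\natural}A$ and $B^{\natural}$ simultaneously, but the components you extract from ${}^{\natural}A$ are those of $A^*$, not of $A$; since $*$ is an automorphism of the commutative algebra permuting the Wedderburn components, $\ker A_i=0$ for all $i$ still forces $\ker (A^*)_i=0$ for all $i$, so nothing breaks, but this should be said. Third, Proposition \ref{prop:prop_dec} is stated only for two square matrices of the \emph{same} size $m$, while here $A_i\in\Mat_{m_A}(F_i)$ and $B_i\in\Mat_{m_B}(F_i)$ with possibly $m_A\neq m_B$; either extend the proposition (the identity $\rk H_X=m_Bm_A-k_Ak_B$ goes through verbatim) or, more simply, observe directly that when $A_i$ and $B_i$ are invertible both $(H_X)_i$ and $(H_Z)_i$ have full row rank $m_Am_B$, whence $K_i=2m_Am_B-m_Am_B-m_Am_B=0$.
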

The following theorem is a variant of \cite[ Theorem 5]{kovalev2013quantum}, respectively \cite[Statement 12]{lin2023quantum}. For completeness we include a proof for our case.
\begin{theorem}\label{thm:distance}
Let $G_{A},G_{B} \subseteq D_{2n}$ be two proper subgroups such that $G_{AB}:= G_{A} \cap G_{B}$ is abelian and normal in both $G_{A},G_{B}$ and $[G_{A}:G_{AB}] \cdot [G_{B}:G_{AB}] \cdot |G_{AB}| =2n$.\footnote{This condition ensures that the code $LP(A,B)$ does not decompose into smaller, mutually disconnected subcodes associated with distinct double cosets of $G_{A}\backslash D_{2n}/G_{B}$. A similar condition was considered by Pryadko and Lin \cite{lin2023quantum} to guarantee code connectedness in their setting. The normality of $G_{AB}$ ensures that the replacements $a_{i,j} \mapsto {}^{\sharp}a_{i,j}, b_{i,j} \mapsto b_{i,j}^{\sharp}$ via left (and right) regular representations yield homomorphisms into matrix algebras over $\fq[G_{AB}]$.}
    Let $A \in \Mat_{m_{A}}(\fq[G_{A}]), B \in \Mat_{m_{B}}(\fq[G_{B}])$ and define 
    $$d_{0}:= \min\{d(\mC(A)),d(\mC(B)), d(\mC(A^\top)), d(\mC(B^\top))\}.$$ Then the minimum distance of the lifted product code $LP(A,B)$ satisfies
    $$D \ge \left\lfloor \frac{d_0}{|G_{AB}|} \right\rfloor.$$
\end{theorem}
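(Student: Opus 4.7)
The plan is to follow the general strategy of \cite{kovalev2013quantum,lin2023quantum} and bound the quantum distance by reducing to a classical problem through a projection onto the abelian subgroup algebra $\fq[G_{AB}]$, where the degeneracy lemma (Lemma~\ref{lem:dist}) can be applied. By CSS symmetry it suffices to bound the $X$-distance $D_X$; the argument for $D_Z$ is identical after interchanging the roles of $A$ and $A^\top$ (respectively $B$ and $B^\top$). I fix a minimum-weight representative $c\in\ker H_X\setminus\mathrm{rowspan}(H_Z)$ with $|c|=D_X$ and aim to show $|c|\cdot|G_{AB}|\geq d_0$.

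Exploiting the normality of $G_{AB}$ in both $G_A$ and $G_B$, as highlighted in the theorem's footnote, the left (respectively right) regular representation of $\fq[G_A]$ (respectively $\fq[G_B]$) restricts to a block-matrix representation with entries in $\fq[G_{AB}]$ of block size $[G_A:G_{AB}]$ (respectively $[G_B:G_{AB}]$). Applied entrywise to $A$ and $B$, this yields ``abelianized'' matrices $\widetilde A\in\Mat_{m_A[G_A:G_{AB}]}(\fq[G_{AB}])$ and $\widetilde B\in\Mat_{m_B[G_B:G_{AB}]}(\fq[G_{AB}])$. The index condition $[G_A:G_{AB}]\cdot[G_B:G_{AB}]\cdot|G_{AB}|=2n$ ensures that $LP(\widetilde A,\widetilde B)$ and $LP(A,B)$ have the same total physical length, and a direct comparison shows that their parity-check pairs agree up to a coordinate permutation. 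Hence $c$ descends to a vector in the $X$-kernel of $LP(\widetilde A,\widetilde B)$ that is still nontrivial modulo the row span of the $Z$-parity check of $LP(\widetilde A,\widetilde B)$.

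Since $\fq[G_{AB}]$ is commutative, I would further decompose $\widetilde A$ and $\widetilde B$ via the primitive idempotents of $\fq[G_{AB}]$ (cf.\ \eqref{eq:dec_generic}), splitting $LP(\widetilde A,\widetilde B)$ into a direct sum of smaller CSS codes. By Lemma~\ref{lem:dist}, in any summand where both $\widetilde A$ and $\widetilde B$ have trivial classical kernels the quantum dimension vanishes, so the nontriviality of $c$ forces a nonzero ``slice'' of $c$ in a single summand to project to a nonzero classical codeword of one of $\mC(A),\mC(B),\mC(A^\top),\mC(B^\top)$, whose Hamming weight is therefore at least $d_0$. A counting step then relates the weights: each coordinate of the classical slice corresponds to a single $\fq[G_{AB}]$-entry, hence to at most $|G_{AB}|$ coordinates of $c$ in $\fq$, so $d_0\leq|G_{AB}|\cdot|c|$ and thus $D_X\geq\lfloor d_0/|G_{AB}|\rfloor$.

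The main obstacle I expect is making the slicing step rigorous while preserving nontriviality modulo stabilizers: this requires careful bookkeeping across two successive decompositions (first by cosets of $G_{AB}$ in $D_{2n}$, then by the primitive idempotents of $\fq[G_{AB}]$) and is precisely where the index condition is needed, via its footnoted connectedness interpretation, to prevent the logical class of $c$ from collapsing across the double-coset decomposition $G_A\backslash D_{2n}/G_B$.
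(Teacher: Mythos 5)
Your setup (abelianizing $A$ and $B$ over $\fq[G_{AB}]$ via the regular representations, and the final counting step converting $\fq[G_{AB}]$-coordinates into blocks of $|G_{AB}|$ coordinates over $\fq$) matches the paper. But the middle of your argument has a genuine gap: you never restrict the matrices to the support of $c$, and instead try to split $LP(\widetilde A,\widetilde B)$ via the primitive idempotents of $\fq[G_{AB}]$. This fails for two reasons. First, Lemma~\ref{lem:dist} only applies to summands where \emph{both} classical kernels are trivial; for the actual matrices $A,B$ there will in general be summands with nontrivial kernels (otherwise $\dim LP(A,B)=0$ and there is nothing to prove), and in such a summand a nontrivial logical class does \emph{not} hand you a codeword of one of $\mC(A),\mC(B),\mC(A^\top),\mC(B^\top)$ --- extracting a classical codeword of controlled weight from an arbitrary minimum-weight logical representative is precisely the hard part of any product-code distance bound, and your proposal asserts this step rather than proves it. Second, projection onto an idempotent summand does not respect Hamming weight: multiplying by a primitive idempotent of $\fq[G_{AB}]$ generically spreads the support of $c$ over entire blocks, so even if a summand did produce a classical codeword, its weight would not be bounded above by $|G_{AB}|\cdot|c|$, and your concluding inequality $d_0\le|G_{AB}|\cdot|c|$ would not follow.

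The missing idea is the support-restriction (``cleaning'') argument of Kovalev--Pryadko, which is what the paper uses. Take any $c\in\ker H_X$ with $w_H(c)<\lfloor d_0/|G_{AB}|\rfloor$ and delete from ${}^{\sharp}A$ and $B^{\sharp}$ all columns not incident to the support of $c$, obtaining ${}^{\sharp}A[I_A]$ and $B^{\sharp}[I_B]$. Over $\fq$ these restricted matrices have at most $|G_{AB}|\,w_H(c)\le d_0-1$ columns; since any $d_0-1$ columns of a parity-check matrix of a code of minimum distance at least $d_0$ are linearly independent, the restricted classical kernels are trivial. \emph{Now} Lemma~\ref{lem:dist} applies to the restricted lifted product and gives $\dim LP({}^{\sharp}A[I_A],B^{\sharp}[I_B])=0$, hence $\mC(H_X[\mathcal{I}])=\mC(H_Z[\mathcal{I}])^\perp$; padding $c$ back with zeros shows $c$ lies in the row span of $H_Z$, i.e.\ it is a trivial logical operator, and the distance bound follows by contraposition (with the symmetric argument for $H_Z$). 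Your idempotent decomposition should be replaced by this restriction step.
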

    \begin{proof}
        Let $\ell_{A}:= [G_{A}:G_{AB}], \ell_{B}:= [G_{B}:G_{AB}]$. Moreover we replace the elements $a_{i,j}, b_{i,j}$ of $A,B$ by some square matrices. 
        More precisely we consider the left (respectively right) regular matrix representation with respect to a fixed basis of $\fq[G_{AB}]$ and define ${}^{\sharp}A:= [L_{G_{AB}}(a_{i,j})^\top]_{1 \le i,j \le m_{A}},$ $B^{\sharp}:= [R_{G_{AB}}(b_{i,j})]_{1 \le i,j \le m_{B}}$. Note that since the algebra $\fq[G_{AB}]$ is commutative, we do not need to distinguish between the left and right representations of $\fq[G_{AB}]$. Thus we use the bold notation from Definition \ref{def:matrix_rep} and define
        $$\mathbf{A} =\mathbf{{}^{\sharp}A} \otimes I_{\ell_{B}}, \quad \mathbf{B} = I_{\ell_{A}} \otimes \mathbf{B^{\sharp}}$$
        and the parity check matrices of $LP(A,B)$ by$$\mathbf{H_{X}} = [(\mathbf{{}^{\sharp}A} \otimes I_{\ell_{B}}) \otimes I_{m_{B}}, -I_{m_{A}} \otimes (I_{\ell_{A}} \otimes \mathbf{B^{\sharp}})], \quad \mathbf{H_{Z}} = [I_{n_{A}} \otimes(I_{\ell_{A}} \otimes \mathbf{B^{\sharp}}), -({}^{\sharp}\mathbf{A} \otimes I_{\ell_{B}}) \otimes I_{n_{B}}].$$

        Let $c \in \mC(H_{X})$ such that $w_{H}(c) < \lfloor d_{0} /|G_{AB}| \rfloor$. We define reduced matrices $${}^{\sharp}A[I_{A}]:=({}^{\sharp}a_{i,j})_{[m_{A}\ell_{A}]\times I_{A}}, \quad B^{\sharp}[I_{B}]:=(b^{\sharp}_{i,j})_{[m_{B}\ell_{B}]\times I_{B}},$$
        where $I_{A} \subseteq [m_{A}\ell_{A}], I_{B} \subseteq [m_{B}\ell_{B}]$ label the columns of ${}^{\sharp}A, B^{\sharp}$ incident to nonzero elements of $\mathbf{c}$ in $\mathbf{H_{X}}\mathbf{c}=0$. Let $\mI_A, \mI_B$ be the index sets of all columns in the corresponding $\mathbf{{}^{\sharp}A}, \mathbf{B^{\sharp}}$ and let $\mI = \mI_{A} \times [\ell_{B}m_{B}] \bigcup \mI_{B} \times [\ell_{A}m_{A}]$ be the labeling of all such columns in $\mathbf{H_{X}}.$
        Each element of $\fq[G_{AB}]$ corresponds to a block of size $|G_{AB}|$. Thus
        $$|\mathcal{I}_{\mu}| = |G_{AB}||I_{\mu}| \le |G_{AB}|w_{H}(c), \quad \mu \in \{A,B\}.$$
        Hence $\mathbf{{}^{\sharp}A}[I_{A}], \mathbf{B^{\sharp}}[I_{B}]$ have at most $d_{0}-1$ columns which implies that all columns in the parity check matrices are linearly independent. This gives $$\dim \mC({}^{\sharp}A[I_{A}]) = \dim \mC (B^{\sharp}[I_{B}])=0.$$
               Considering $LP({}^{\sharp}A[I_{A}], B^{\sharp}[I_{B}])$ with 
        \begin{align*}
            H_{X}[\mI] &= [{}^{\sharp}A[I_{A}] \otimes I_{m_{B}}, -I_{m_{A}} \otimes B^{\sharp}[I_{B}]]\\
             H_{Z}[\mI] &= [I_{|I_{A}|} \otimes {B^{\sharp}}^{\top}[I_{B}], {{}^{\sharp}A}^{\top}[I_{A}] \otimes I_{|I_{B}|}].
        \end{align*}
       Since $G_{AB}$ is abelian and both matrices ${}^{\sharp}A[I_{A}]$, $B^{\sharp}[I_{B}]$ are defined over $\fq[G_{AB}]$, Lemma \ref{lem:dist} gives $\dim LP({}^{\sharp}A[I_{A}], B^{\sharp}[I_{B}]) =0$. But this implies $\mC(H_{X}[\mI])= \mC(H_{Z}[\mI])^\perp$. Clearly, the reduced vector $c[\mI]$ belongs to $\mC(H_{X}[\mI])$ by construction. Hence $c[\mI] \in \mC(H_{Z}[\mI])^\perp$. Since $c$ can be obtained from $c[\mI]$ by extending it with zeroes on the positions $[2m_{A}m_{B}] \setminus \mI$ we have $c \in \mC(H_{Z})^\perp$. Similar arguments show that $c \in \mC(H_{Z})$ with $w_{H}(c) < \lfloor d_{0} /|G_{AB}| \rfloor$ belongs to $\mC(H_{X})^\perp.$

    \end{proof}

   \begin{remark}
     Note that with the theorem above, the best distance statements are achieved by taking two proper subgroups of $D_{2n}$ in the construction which have a trivial intersection. Therefore, particularly suitable are $C_{\ell}$ and $ D_{2(n/\ell)}$, where $\ell|n$. We illustrate the construction with these groups in the following example.
   \end{remark}

\begin{example} 
    We consider the dihedral group $D_{180}$ and its two proper subgroups $D_{20}$ (the dihedral group of order $20$) %= \langle a^{2}, b \rangle$
    and $C_{9}$ (the cyclic group of order $9$)

    intersecting in the trivial group $G_{AB}= \{e\}$, to present a nonabelian lifted product code with nontrivial guaranteed distance. In particular we consider the cyclic code
    $\hat{\mC}_{A}$ of length $9$ with generator polynomial
\begin{align*}
\hat{g}_{A}= (x^2 + x + 1)(x^6 + x^3 + 1)
\end{align*}

and let $\hat{e}_{A}=x-1$ be the (representative) check element, i.e., the generator of the dual code $\hat{\mC}_{A}^\perp$. % equals $\hat{e}_{A}$. 
The code $\hat{\mC}_{A}$ has minimum distance $9$ and we define $a=(a_{1}\hat{e}_{A}, \dots , a_{m}\hat{e}_{A})$ with $a_{i} \in \mathbb{F}_{11}[D_{180}]$ such that $a_{i}\mathbb{F}_{11}[D_{180}]+\hat{e}_{A}\mathbb{F}_{11}[D_{180}]=\mathbb{F}_{11}[D_{180}].$ An algorithmic idea to find suitable $a_i$ is given with Theorem \ref{thm:dimension}.
Using MAGMA we obtain
\begin{align*}
    \hat{g}_{A}(x^{10})= \prod_{i=1}^{20}\hat{g}_{i},
    \end{align*}
    where the irreducible factors can be found with Table \ref{tab:1}.
    
   We also consider the factorization
\begin{align*}
    (x^{90}-1) = \prod_{i=1}^{6}f_{i}\prod_{i=7}^{18}f_{i}^{*}f_{i},
\end{align*}
where each irreducible factor can be found with Table \ref{tab:2}.
Note that $r=6$ and $s=12$.
Using the notation of Theorem \ref{thm:induced_cyc} we obtain
  \begin{align*}
        J_{1}^{A}= \{ 1,2,7,8,9,10\}, \, J_{2}^{A}=\emptyset.
    \end{align*}
For the dihedral code we use the $[20,8,8]_{11}$-code $\hat{\mC}_{B} \subseteq \mathbb{F}_{11}[D_{90}]$ as obtained in \cite{vedenev2019codes}. More precisely, we consider $\hat{\mC}_{B} \subseteq \mathbb{F}_{11}[D_{180}]$ with decomposition
$$\hat{\rho}(\hat{\mC}_{B}) = \bigoplus_{i=1}^{6}\hat{B_{i}},$$
where
   $$\begin{array}{llllll}
    \hat{B_{1}} = \mathbb{F}_{11} \oplus \mathbb{F}_{11}, \hat{B_{2}} = 0 \oplus 0, 
    \hat{B_{3}} = I_{3}(1,-1),
    \hat{B_{4}} = M_{2}(\mathbb{F}_{11}[3]),  
      \hat{B_{5}} = 0, \hat{B_{6}} = 0.  \\
      \end{array} $$
      This code has a generator $\hat{g}_{B}$. The check element is the generator $\hat{g}_{B}^\perp$ of the dual $\hat{\mC}_{B}^\perp$. We let $\hat{e}_{B}= \hat{g}_{B}^\perp$ and define $\mathbf{b}= (b_1 \hat{e}_{B}, \dots, b_m \hat{e}_{B})$ with $b_{i} \in \mathbb{F}_{11}[D_{180}]$ such that $b_i\mathbb{F}_{11}[D_{180}]+\hat{e}_{B}\mathbb{F}_{11}[D_{180}] =\mathbb{F}_{11}[D_{180}]$. 
Moreover, we consider $A,B \in \text{Mat}_{m}(\mathbb{F}_{11}[D_{180}])$ such that all rows lie in $\mathbf{a}\mathbb{F}_{11}[D_{180}],$ respectively $\mathbf{b}\mathbb{F}_{11}[D_{180}]$. We then use the decomposition of $x^{10}-1$ which is 
$$(x^{10}-1)= \prod_{i=1}^{2}\hat{f}_{i}\prod_{i=3}^{6}\hat{f}_{i}^{*}\hat{f}_{i},$$
where each irreducible factor can be found with Table \ref{tab:3}.

   Factorizing $\hat{f}_{i}(x^{9})$ for $i \in [6]$ and applying Theorem \ref{thm:dihedralinduced} we have for the induced code $\mC_{B} = (\mathbb{F}_{11}[D_{2\cdot 90]}])\Omega(\hat{\mC}_{B})$ that

   $$\rho(\mC_{B})= \bigoplus_{i=1}^{18}B_i,$$
where   
   $$B_{i}= \begin{cases} A_{i}
 & \text{if $i \in \{1,3,5,8,12,17\},$} \\
 I_{i}
 & \text{if $i \in \{7,11,18\}$,} \\ 
 0
 & \text{else.} \\ 
\end{cases} $$ 
and hence
    \begin{align*}
        J_{1}^{B}= \{1,3,5,8,12,17\},\, J_{2}^{B}= \{7,11,18\}.
    \end{align*}
 We obtain from Theorem \ref{thm:dimension} that
    \begin{align*}
        &\dim_{\mathbb{F}_{11}} LP(A,B) = 4(m-1)(2m-1)+12(m-1)^{2}+160m^2 + 116m(m-1)+32m(2m-1)
    \end{align*}
and show the possible parameter sets (for $m=1,\dots,5$) of these codes in Table \ref{table:1}: 
    \begin{table}[htb!]
\centering
\begin{tabular}{|c|c|c|} 
\hline
$m$ & $[[N,K,D]]_{11}$ code & $K/N$ \\ [0.5ex] 
\hline\hline
1 & $[[360,192,8]]_{11}$&0.53 \\ 
\hline
2 & $[[1440,1088,8]]_{11}$ & 0.76 \\
\hline
3 & $[[3240,2704,8]]_{11}$ & 0.83 \\
\hline
4 & $[[5760,5040,8]]_{11}$ & 0.88 \\
\hline
5 & $[[9000,8096,8]]_{11}$ & 0.9 \\
\hline
\end{tabular}
\caption{Dihedral lifted product codes obtained from a $[9,1,9]_{11}$-cyclic code and a $[20,8,8]_{11}$-dihedral code. The third column describes the rate of the code.}
\label{table:1}
\end{table}

\end{example}

\section{Conclusion} \label{sec:final}

In this paper we concentrated on nonabelian group code constructions.
Although the existence and construction of good quantum CSS codes over various fields, including $\mathbb{F}_{11}$, have been  extensively explored and demonstrated in \cite{grassl2015quantum}, our codes offer distinct advantages due to their MDPC structure. One significant advantage of MDPC codes is their decodability via graph-based decoders. Recent advancements, particularly iterative belief propagation, see \cite{leverrier2023efficient}, and neural-network-assisted decoding techniques, see \cite{gong2024graph}, demonstrate robust and efficient decoding performance for quantum LDPC codes. These decoders exploit the sparse and  structured nature of LDPC parity-check matrices, significantly reducing computational complexity compared to generic decoding approaches and can be adapted for MDPC codes straight-forwardly.

Notably, among the 2-block codes proposed in \cite{wang2023abelian}, our codes appear to be the first nonabelian quantum MDPC codes, paving the way to the development of a quantum McEliece public key cryptosystem (as in \cite{fujita2012quantum}) based on quantum MDPC codes.
%\textcolor{purple}{I still would not want to state this last statement.}

For future work it would be interesting to consider and analyze decoding algorithms for these dihedral codes; for example, via the generalized discrete Fourier transforms, or the Morita correspondence between $\fq[x]/(x^m -1)$ submodules and left ideals in $\Mat_{2}(\fq)[x]/(x^m -1)$ as discussed in \cite{barbier2012quasi,borello2021dihedral}.

\bibliographystyle{amsplain}
 \bibliography{biblio.bib}

\providecommand{\bysame}{\leavevmode\hbox to3em{\hrulefill}\thinspace}
\providecommand{\MR}{\relax\ifhmode\unskip\space\fi MR }
% \MRhref is called by the amsart/book/proc definition of \MR.
\providecommand{\MRhref}[2]{%
  \href{http://www.ams.org/mathscinet-getitem?mr=#1}{#2}
}
\providecommand{\href}[2]{#2}
\begin{thebibliography}{10}

\bibitem{barbier2012quasi}
M.~Barbier, C.~Chabot, and G.~Quintin, \emph{On quasi-cyclic codes as a
  generalization of cyclic codes}, Finite Fields and their Applications
  \textbf{18} (2012), no.~5, 904--919.

\bibitem{bariffi2022moderate}
J.~Bariffi, S.~Mattheus, A.~Neri, and J.~Rosenthal, \emph{Moderate-density
  parity-check codes from projective bundles}, Designs, Codes and Cryptography
  \textbf{90} (2022), no.~12, 2943--2966.

\bibitem{borello2021dihedral}
M.~Borello and A.~Jamous, \emph{Dihedral codes with prescribed minimum
  distance}, Arithmetic of Finite Fields: 8th International Workshop, WAIFI
  2020, Rennes, France, July 6--8, 2020, Revised Selected and Invited Papers 8,
  Springer, 2021, pp.~147--159.

\bibitem{calderbank1996good}
A.R. Calderbank and P.W. Shor, \emph{Good quantum error-correcting codes
  exist}, Physical Review A \textbf{54} (1996), no.~2, 1098--1105.

\bibitem{fujita2012quantum}
H.~Fujita, \emph{Quantum {McEliece} public-key cryptosystem}, Quantum
  Information \& Computation \textbf{12} (2012), no.~3-4, 181--202.

\bibitem{gong2024graph}
A.~Gong, S.~Cammerer, and J.M. Renes, \emph{Graph neural networks for enhanced
  decoding of quantum {LDPC} codes}, 2024 IEEE International Symposium on
  Information Theory (ISIT), IEEE, 2024, pp.~2700--2705.

\bibitem{grassl2015quantum}
M.~Grassl and M.~R{\"o}tteler, \emph{Quantum {MDS} codes over small fields},
  2015 IEEE International Symposium on Information Theory (ISIT), IEEE, 2015,
  pp.~1104--1108.

\bibitem{huffman2021concise}
W.C. Huffman, J.-L. Kim, and P.~Sol{\'e}, \emph{Concise encyclopedia of coding
  theory}, CRC Press, 2021.

\bibitem{kovalev2013quantum}
A.~Kovalev and L.P. Pryadko, \emph{Quantum {K}ronecker sum-product low-density
  parity-check codes with finite rate}, Physical Review A \textbf{88} (2013),
  no.~1, 012311.

\bibitem{lam1991first}
T.-Y. Lam, \emph{A first course in noncommutative rings}, vol. 131, Springer,
  1991.

\bibitem{leverrier2023efficient}
A.~Leverrier and G.~Z{\'e}mor, \emph{Efficient decoding up to a constant
  fraction of the code length for asymptotically good quantum codes}, ACM
  Transactions on Algorithms (2023).

\bibitem{lin2023quantum}
H.-K. Lin and L.P. Pryadko, \emph{Quantum two-block group algebra codes}, arXiv
  preprint arXiv:2306.16400 (2023).

\bibitem{martinez2015structure}
F.E.~Brochero Mart{\'\i}nez, \emph{Structure of finite dihedral group algebra},
  Finite Fields and their Applications \textbf{35} (2015), 204--214.

\bibitem{panteleev2021quantum}
P.~Panteleev and G.~Kalachev, \emph{Quantum {LDPC} codes with almost linear
  minimum distance}, IEEE Transactions on Information Theory \textbf{68}
  (2021), no.~1, 213--229.

\bibitem{panteleev2022asymptotically}
\bysame, \emph{Asymptotically good quantum and locally testable classical
  {LDPC} codes}, Proceedings of the 54th Annual ACM SIGACT Symposium on Theory
  of Computing, 2022, pp.~375--388.

\bibitem{rowen1988ring}
L.H. Rowen, \emph{Ring theory v1}, Academic Press, 1988.

\bibitem{steane1996simple}
A.M. Steane, \emph{Simple quantum error-correcting codes}, Physical Review A
  \textbf{54} (1996), no.~6, 4741.

\bibitem{vedenev2019codes}
K.V. Vedenev and V.M. Deundyak, \emph{Codes in a dihedral group algebra},
  Automatic Control and Computer Sciences \textbf{53} (2019), 745--754.

\bibitem{vedenev2020relationship}
\bysame, \emph{Relationship between codes and idempotents in a dihedral group
  algebra}, Mathematical Notes \textbf{107} (2020), no.~1-2, 201--216.

\bibitem{wang2023abelian}
R.~Wang, H.-K. Lin, and L.P. Pryadko, \emph{Abelian and non-abelian quantum
  two-block codes}, arXiv preprint arXiv:2305.06890 (2023).

\bibitem{zimmermann1994beitrage}
K.-H. Zimmermann, \emph{Beitr{\"a}ge zur algebraischen {Codierungstheorie}
  mittels modularer {Darstellungstheorie}}, Lehrstuhl II f{\"u}r Mathematik,
  Universit{\"a}t Bayreuth, 1994.

\end{thebibliography}

\newpage
\appendix
\section{Appendix}

\begin{table}[htb!] 
    \centering
    \renewcommand{\arraystretch}{1.2} % Adjust row height
    \begin{tabular}{|c| l l l l l l l|} % Only one vertical line after the first column
        \hline
        degree & 0 & 1 & 2 & 3 & 4 & 5 & 6 \\
        \hline
        $\hat{g}_{1}$  & 1  & 1  & 1  & 0  & 0  & 0  & 0  \\
        \hline
        $\hat{g}_{2}$  & 4 & 2 & 1 & 0 & 0 & 0 & 0 \\
       \hline
        $\hat{g}_{3}$  & 9 & 3 & 1 & 0 & 0 & 0 & 0 \\
        \hline
        $\hat{g}_{4}$  & 5 & 4 & 1 & 0 & 0 & 0 & 1 \\
         \hline
        $\hat{g}_{5}$  & 3 & 5 & 1 & 10 & 0 & 0 & 1 \\
         \hline
        $\hat{g}_{6}$  & 3 & 6 & 1 & 0 & 0 & 0 & 0 \\
          \hline
       $\hat{g}_{7}$  & 5 & 7 & 1 & 0 & 0 & 0 & 0 \\
          \hline
        $\hat{g}_{8}$  & 9 & 8 & 1 & 0 & 0 & 0 & 0 \\
          \hline
       $\hat{g}_{9}$  & 4 & 9 & 1 & 0 & 0 & 0 & 0 \\
           \hline
        $\hat{g}_{10}$  & 1 & 10 & 1 & 0 & 0 & 0 & 0 \\
          \hline
       $\hat{g}_{11}$  & 1 & 0 & 0 & 1 & 0 & 0 & 1 \\
          \hline
       $\hat{g}_{12}$  & 4 & 0 & 0 & 2 & 0 & 0 & 1 \\
          \hline
           $\hat{g}_{13}$  & 9 & 0 & 0 & 3 & 0 & 0 & 1 \\
          \hline
           $\hat{g}_{14}$  & 5 & 0 & 0 & 4 & 0 & 0 & 1 \\
          \hline
           $\hat{g}_{15}$  & 3 & 0 & 0 & 5 & 0 & 0 & 1 \\
          \hline
           $\hat{g}_{16}$  & 3 & 0 & 0 & 6 & 0 & 0 & 1 \\
          \hline
           $\hat{g}_{17}$  & 5 & 0 & 0 & 7 & 0 & 0 & 1 \\
          \hline
           $\hat{g}_{18}$  & 9 & 0 & 0 & 8 & 0 & 0 & 1 \\
          \hline
           $\hat{g}_{19}$  & 4 & 0 & 0 & 9 & 0 & 0 & 1 \\
          \hline
           $\hat{g}_{20}$  & 1 & 0 & 0 & 10 & 0 & 0 & 1 \\
          \hline
    \end{tabular}
    \caption{Irreducible factors of $\hat{g}_{A}(x^{10}) \in \mathbb{F}_{11}[x],$ where $\hat{g}_{A}(x)=(x^{2}+x+1)(x^{6}+x^{3}+1).$ The first column lists the irreducible factors denoted by $\hat{g}_i$, the remaining columns list the coefficients of the monomials of each factor.}
  
    \label{tab:1}
\end{table}

\begin{table}[h] 
    \centering
    \renewcommand{\arraystretch}{1.2} % Adjust row height
    \begin{tabular}{|c| l l l l l l l|} % Only one vertical line after the first column
        \hline
        degree & 0 & 1 & 2 & 3 & 4 & 5 & 6 \\
        \hline
        $f_{1}$  & 10  & 1  & 0  & 0  & 0  & 0  & 0  \\
        \hline
        $f_{2}$  & 1 & 1 & 0 & 0 & 0 & 0 & 0 \\
       \hline
        $f_{3}$  & 1 & 1 & 1 & 0 & 0 & 0 & 0 \\
        \hline
        $f_{4}$  & 1 & 10 & 1 & 0 & 0 & 0 & 0 \\
         \hline
        $f_{5}$  & 1 & 0 & 0 & 1 & 0 & 0 & 1 \\
         \hline
        $f_{6}$  & 1 & 0 & 0 & 10 & 0 & 0 & 1 \\
          \hline
        $f_{7}$  & 9 & 1 & 0 & 0 & 0 & 0 & 0 \\
       $f_{7}^{*}$  & 5 & 1 & 0 & 0 & 0 & 0 & 0 \\
           \hline
        $f_{8}$  & 7 & 1 & 0 & 0 & 0 & 0 & 0 \\
       $f_{8}^{*}$  & 8 & 1 & 0 & 0 & 0 & 0 & 0 \\
          \hline
        $f_{9}$  & 3 & 1 & 0 & 0 & 0 & 0 & 0 \\
       $f_{9}^{*}$  & 4 & 1 & 0 & 0 & 0 & 0 & 0 \\
          \hline
        $f_{10}$  & 2 & 1 & 0 & 0 & 0 & 0 & 0 \\
       $f_{10}^{*}$  & 6 & 1 & 0 & 0 & 0 & 0 & 0 \\
            \hline
        $f_{11}$  & 4 & 2 & 1 & 0 & 0 & 0 & 0 \\
       $f_{11}^{*}$  & 3 & 6 & 1 & 0 & 0 & 0 & 0 \\
             \hline
        $f_{12}$  & 9 & 3 & 1 & 0 & 0 & 0 & 0 \\
       $f_{12}^{*}$  & 5 & 4 & 1 & 0 & 0 & 0 & 0 \\
             \hline
        $f_{13}$   & 5 & 7 & 1 & 0 & 0 & 0 & 0\\
       $f_{13}^{*}$  & 9 & 8 & 1 & 0 & 0 & 0 & 0 \\
              \hline
        $f_{14}$  & 3 & 5 & 1 & 0 & 0 & 0 & 0 \\
       $f_{14}^{*}$  & 4 & 9 & 1 & 0 & 0 & 0 & 0 \\
              \hline
        $f_{15}$  & 4 & 0 & 0 & 2 & 0 & 0 & 1 \\
       $f_{15}^{*}$  & 3 & 0 & 0 & 6 & 0 & 0 & 1 \\
              \hline
        $f_{16}$  & 9 & 0 & 0 & 3 & 0 & 0 & 1 \\
       $f_{16}^{*}$  & 5 & 0 & 0 & 4 & 0 & 0 & 1 \\
              \hline
        $f_{17}$  & 3 & 0 & 0 & 5 & 0 & 0 & 1 \\
       $f_{17}^{*}$  & 4 & 0 & 0 & 9 & 0 & 0 & 1 \\
       \hline
       $f_{18}$  & 5 & 0 & 0 & 7 & 0 & 0 & 1 \\
       $f_{18}^{*}$  & 9 & 0 & 0 & 8 & 0 & 0 & 1 \\
       \hline
    \end{tabular}
    \caption{Irreducible factors of the polynomial $x^{90}-1 \in \mathbb{F}_{11}[x].$ The first column lists the irreducible factors, the remaining columns list the coefficients of the monomials of each factor.}
    \label{tab:2}
\end{table}

\begin{table}[h] 
    \centering
    \renewcommand{\arraystretch}{1.2} % Adjust row height
    \begin{tabular}{|c| l l|} % Only one vertical line after the first column
        \hline
        degree & 0 & 1  \\
        \hline
        $\hat{f}_{1}$  & -1  & 1  \\
        \hline
        $\hat{f}_{2}$  & 1 & 1  \\
       \hline
        $\hat{f}_{3}$  & -2 & 1 \\
           $\hat{f}_{3}^{*}$  & -6 & 1 \\
        \hline
        $\hat{f}_{4}$  & -3 & 1  \\
        $\hat{f}_{4}^{*}$  & -4 & 1 \\
         \hline
        $\hat{f}_{5}$  & -7 & 1 \\
          $\hat{f}_{5}^{*}$  & -8 & 1\\
         \hline
        $\hat{f}_{6}$  & -9 & 1\\
          $\hat{f}_{6}^{*}$  & -5 & 1\\
          \hline
    \end{tabular}
    \caption{Irreducible factors of the polynomial $x^{10}-1 \in \mathbb{F}_{11}[x].$ The first column lists the irreducible factors, the remaining columns list the coefficients of the monomials of each factor.}
    \label{tab:3}
\end{table}
\end{document}